\def\C{\mathbf{ C}}
\def\N{\mathbf{ N}}
\def\Q{\mathbf{ Q}}
\def\R{\mathbf{ R}}
\def\Z{\mathbf{Z}}

\def\r#1#2{``#1''$\rightarrow$``#2''}
\documentclass{article}
\usepackage[hyphens]{url}
\usepackage{amsmath, amssymb,amsthm}
\usepackage{pdfpages}
\usepackage{verbatim}
\usepackage{hyperref}
\usepackage{combelow}
\usepackage[show]{ed}
\usepackage{graphicx}
\usepackage{xspace}
\usepackage{physics}
\usepackage[ruled,vlined]{algorithm2e}

\newtheorem{proposition}{Proposition}
\newtheorem{theorem}{Theorem}

\newtheorem{definition}{Definition}

\newtheorem{example}{Example}

\newtheorem{remark}{Remark}

\begin{document}

\def\r{$\rightarrow$}\noindent

\author{James H. Davenport$^{1}$, Matthew England$^{2}$, Scott McCallum$^3$
        }
\title{Enhanced CAD-Based Quantifier Elimination\\
       With Multiple Equational Constraints --\\
       \textbf{\textit{Preliminary Draft}}}  
\date{\small $^1$University of Bath, UK; $^2$Coventry University, UK; $^3$Macquarie University, Australia}

\maketitle

\begin{abstract}
This paper presents two enhancements to cylindrical algebraic decomposition (CAD) based quantifier elimination (QE) for cases in which multiple equational constraints are present in the given input formula $\phi^*$. The first enhancement provides more detail in the output when there is a conceptual partition of the set of variables of $\phi^*$ into parameters and unknowns. In such cases, we describe how to partition the parameter space so that: (1) in each open set of the partition the number $\nu$ of associated unknowns is a finite constant or is infinite; and (2) for each such open set for which $\nu$ is finite, an expression for the unknowns in terms of the parameters is provided. The second enhancement is an efficiency gain achievable in certain situations. Indeed, when certain conditions are met, the second CAD equational projection step can be reduced more significantly than is supported by the prior existing theory. Relevant theorems and worked examples for both enhancements are provided. Application areas include approximation theory, cuspidal manipulator classification, and biological/chemical systems.
\end{abstract}

\section{Introduction}
Quantifier elimination (QE) by cylindrical algebraic decomposition (CAD) is a general purpose tool for solving algebraic and geometric problems. While its many improvements have led to some success with its use in practical applications, there is room for further development of the method. Improvements which better address the requirements of special types of problems and take advantage of any favourable features of such problem types to reduce the amount of computation needed may be particularly beneficial. In this paper we present two enhancements to CAD based QE for when multiple equational constraints (ECs) are present in the given input formula. 

The first enhancement consists of providing more detail in the output, when appropriate, as follows. In classical QE for Tarski algebra, when given a quantified formula $\phi^*$ in prenex form, the output is usually expected to be a quantifier-free formula $\phi'$, involving the free variables of $\phi^*$ only, equivalent to $\phi^*$. However, it is sometimes the case that the given $\phi^*$ is itself the result of a careful formulation (or reformulation) of a problem of practical interest \cite{Collins1996, Jirstrand1997}. Practical mathematical questions often seek the values of, or information about, desired {\em unknowns}, given a set of data constituting the {\em parameters} of the problem. Particularly useful is some kind of closed-form expression for each unknown in terms of the parameters. Thus, in the case that a QE input $\phi^*$ comes with a clear partition of its set of variables into parameters and unknowns, we aim to provide more than a simple quantifier-free formula $\phi'$ describing the truth set of $\phi^*$. Indeed, we would like to provide a partition $\mathcal{P}$ of the 
parameter space, such that in each element of $\mathcal{P}$ the number 
of solutions (that is, the number of associated vectors
of unknowns) is a finite constant or is infinite. Ideally then, for 
each element of $\mathcal{P}$ having a finite positive constant number 
of solutions, we would like to find an expression for the solutions in 
terms of the parameters. This could be regarded as extending the scope 
of the work reported in \cite{LR07}, in which the given parametric 
polynomial systems are assumed to be quantifier-free and to satisfy 
other restrictions. Problems of the kind we treat are found in 
application areas such as 
approximation theory \cite{Collins1996}, 
cuspidal manipulator classification \cite{LR07}, 
control systems design \cite{Jirstrand1997}, 
and biological and chemical systems \cite{SE2022}.

Our second enhancement to CAD-based QE with multiple equational constraints (ECs) consists of identifying efficiency gains achievable in specific situations. Our aim here is to exploit the special structure of certain problem classes of genuine practical interest to ``\emph{push back the doubly exponential wall}'' of CAD-based QE. The metaphor is chosen to reflect that while the fundamental complexity of the general QE problem means such a wall exists, optimisations like those we present move the wall back bringing many new problems into the scope of practical solutions. The tantalizing, yet elusive, goal of identifying interesting sub-problem classes for which a major improvement to the upper complexity bound for CAD-based QE could be demonstrated also motivates us.

\subsection{Notation}

We use the following notation throughout, unless explicitly varied.
\begin{description}

\item[$\phi^*$:] The given input prenex formula of Tarski algebra.

\item[$n$:] The total number of variables, free and bound, in $\phi^*$. The
variables we generally use are $x_1,\ldots,x_n$.  CAD requires a fixed ordering of the variables, $\prec$, which determines the order in which they are eliminated:  we assume an ordering consistent with the variable indexing ($x_1 \prec \dots \prec x_n)$ with $x_n$ the first variable to be eliminated. We may then say a polynomial is of level $k$ if the highest ordered variable it contains is $x_k$.

In small examples, we may use simple letters without subscripts.
For instance, when $n=3$, we may use $x,y,z$ with $z$ the first to be eliminated.

\item[$k$:] The number of variables in $\phi^*$
that we consider to be bound (by quantifiers).
We assume these variables are $x_{n-k+1},\ldots,x_n$, consistent with the assumption that $\phi^*$ is a prenex formula. 
We denote the number of free variables of $\phi^*$ by $s = n-k$.

\item[$d$:] The maximum degree (\emph{in any individual variable}) of the polynomials in $\phi^*$.

\item[${\bf f}:=(f_1,\ldots,f_{t})$]: The polynomials in $\phi^*$ which are used in equational constraints for $x_{1},\allowbreak\ldots,x_n$.  We assume that $t \ge 1$; in fact, since we are primarily interested in the case of ``multiple'' equational constraints, we will usually assume that $t \ge k$. Note that no $f_i$ should lie in $\Z[x_1, \ldots,x_s]$, i.e. every equational constraint should involve a bound variable.

\end{description}
In summary, we may express $\phi^*$ in the form
\[ 
(Q_{n-k+1} x_{n-k+1}) \dots (Q_n x_n) \phi(x_1, \ldots, x_n)
\]
where $\phi$ is $f_1 = 0 \wedge \cdots \wedge f_{t} = 0 \wedge \psi$, for some quantifier-free formula $\psi$. We will further assume that $(Q_j x_j) = (\exists x_j)$, for every $j$. While we are primarily interested in enhancing QE for such formulas of Tarski algebra, we may sometimes consider analogous QE for constructible sets over the complex field $\C$.
 
\subsection{Examples}

\begin{example}
Let $\phi$ be $(x-1)^2+(y-2)^2=0$. Then $\phi$ is equivalent to $x=1 \wedge y=2$ over the reals $\R$. So, if we put $\phi^* \equiv \phi$ (that is, we apply no quantifiers to $\phi$) as input, we may find output $\phi' \equiv x = 1 \wedge y = 2$. If we put $\phi^* \equiv (\exists y) \phi$ as input, we may find output $\phi' \equiv x = 1$.

We remark that over $\C$ the situation is different. For example, if we put $\phi^* \equiv (\exists y) \phi$ as input, then we may find output $\phi' \equiv 0=0$.
\end{example}

\begin{example}
We consider an interesting problem based on \cite[Example 3.11]{FGT01}. The original example presents a Gr\"obner basis under the variable ordering 
$x \succ y \succ z \succ a \succ b \succ c$ for the ideal $I$  of $K[c,b,a,z,y,x]$ generated by
\[
\{a x^2 + x + y, \, b x + z, \, c y + y - z\},
\]
where $K$ is an arbitrary field. The context is the study of systems of parametric polynomial equations, in which $c,b,a$ are the parameters and $z,y,x$ are the unknowns. Accordingly, let us put $K = \R$, and define
\[
\phi^* \equiv (\exists z)(\exists y)(\exists x) [a x^2 + x + y = 0 
                                           \wedge b x + z = 0
                                           \wedge c y  + y - z = 0].
\]
For now, we make some simple observations concerning this problem, which 
we will discuss in more detail in later sections. First, it is not 
difficult to see quickly that $\phi^*$ is equivalent to $0=0$; that is, 
the truth set of $\phi^*$ in real $c,b,a$-space is the whole space. The reason is that, for all $c,b,a \in \R$, $x = y = z = 0$ is a solution of the given system. However, it is reasonable to ask for a description of all the solutions, as the parameter triple $(c,b,a)$ varies in $\R^3$. For example, when we put $c = b = a = 0$, we easily see that $(x,y,z) = (0,0,0)$ is the unique solution. 
Alternatively, suppose we put $c = -1$ and $b = 0$,
and take an arbitrary real $a > 0$. 
Then our system becomes $a x^2 + x + y = 0 \wedge z = 0$, which has infinitely many real solutions. Indeed, the set of all real solution triples 
is $\{(x,y,0) \in \R^3~|~a x^2 + x + y = 0\}$; that is, the parabola
$y = -a x^2 - x$ lying in the plane $z = 0$.
\end{example}
Subsequent sections will detail an enhancement to CAD-based QE for such parametric systems. This will allow a generic solution description for this example. Comparison with ideal theoretic methods applicable primarily for the case $K = \C$, but with relevance to the real case also, is anticipated in the final version of this paper.

\begin{example}
For this example, $n$ does not denote the number of variables as normal.  Solotareff's approximation problem \cite{Collins1996} asks one to find the best approximation, in the sense of the uniform norm on the interval $[-1,1]$, to a real polynomial of degree $n$ by a real polynomial  of degree at most $n-2$. The best approximation is known to be unique. The given polynomial may be taken to be $x^n + r x^{n-1}$, where $r$ is an arbitrary non-negative real number. Theory
in \cite{Achieser1956} provides explicit, computational expressions for the coefficients of the best approximation when 
$0 \le r \le S_n = n(\tan(\pi/(2n)))^2$. For $r > S_n$
the question may be advantageously formulated as a QE problem using some further theory in \cite{Achieser1956}, as explained in \cite{Collins1996}. 

We assume $r > S_n$ henceforth. The case $n=2$ is easy to solve by hand, while the paper \cite{Collins1996} describes computer-assisted solutions to this problem for the cases $n = 3,~4$, and a partial solution in the case $n=5$. Here we consider the QE formulation of the problem in the case $n=3$ presented in that paper. That is, noting that $S_3 = 1$, we wish to find the best approximation $ax + b$ to $x^3 + rx^2$, where $r > 1$. Using theory we deduce $a = 1$, and we obtain the following quantified formula which implicitly expresses $b$ in terms of $r$:
\begin{align*}
\phi^* \equiv (\exists u)[r > 1 \wedge -1 < u \wedge u < 1 
 &\wedge 3u^2 + 2ru - 1 = 0 \\
 & \wedge u^3 + ru^2 - u + r - 2b = 0].
\end{align*}
With the help of the QE program used in \cite{Collins1996} the quantifier-free formula
\begin{align*}
\phi' &\equiv [27b^2 - 2r^3 b - 36 rb + r^4 + 11r^2 - 1 = 0 \\
&\qquad  \wedge 27b - r^3 - 18r < 0 \wedge r > 1]
\end{align*}
was found as a solution. Insightful examination of the program's trace yields a still more explicit expression of $b$ in terms of $r > 1$: $b$ is the lesser of the two real roots of the polynomial on the left-hand side of the first conjunct in $\phi'$.
\end{example}
The publication of \cite{Collins1996} preceded the rigorous, partial validation of the equational constraint projection reduction technique used in that paper. (See Section 2.1 for more details about this technique.) The theory in \cite{McCallum1999} fully justified such reduction in the case of three variables; hence the solution of Solotareff in the case $n = 3$ reviewed in this example is confirmed. 
Solotareff's problem in the case $n = 4$ will be examined in detail in Section 5, following presentation of progress on equational projection validation in Section 4.






\subsection{Outline of the paper}

We outline the structure of the paper, and our contributions.
Section 2 provides a summary of relevant background material: we will assume that the reader is familiar with the CAD algorithm, its theory, and its application to QE; but will offer a summary of work published to date on the use of equational constraints, when present, to reduce the size of projection sets and to simplify the lifting process. So far, only a {\em semi-restricted} equational projection operation has been validated in general \cite{McCallum2001, EBD2020}. 

In Section 3, we describe a new enhancement to traditional ``single-step projection'' CAD-based QE with many equational constraints.
We describe an algorithmic framework for finding generic solutions of such QE problems representing parametric systems in which the output is more informative than that supplied by existing CAD-based QE tools.
In Section 4 we report progress on our efforts to validate Collins' {\em original} fully reduced equational projection operation, in the case that multiple equational constraints are present which satisfy certain conditions. 
Section 5 contains detailed discussion of two examples, in which we apply the work reported in Sections 3 and 4. Section 6 concludes the present paper, looking ahead to planned future contributions.

\section{Background material}

A \emph{cylindrical algebraic decomposition} (CAD) is a decomposition of $\R^n$ into cells (connected subsets), each of which is semi-algebraic, that are arranged cylindrically:  the projection of any two cells is either identical or disjoint, i.e. the cells stack up in cylinders above an induced CAD of $\R^{n-1}$.  The original CAD algorithm produced a CAD relative to a set of input polynomials such that each polynomial had constant sign ($-/0/+$) within each cell (\emph{sign invariance}), with subsequent algorithms developed to take logical formulae as input and produce corresponding \emph{truth-invariant} CADs. The main idea in traditional CAD algorithms is to identify a set of projection polynomials and use these to build the decomposition: with each cell either a section (where a projection polynomial vanishes) or a sector (the space between two sections).

Descriptions of the CAD algorithm, its variants and associated theory
can be found in \cite{ACM1984, Collins1975, CollinsHong1991, McCallum1988, McCallum1998}. We will make use of terminology and results from \cite{McCallum1988, McCallum1998}. The definitions
of key technical terms from those works (such as submanifold, analytic delineability and order-invariance) we use are summarised in \cite{McCallum1999}.

\subsection{Using equational constraints to reduce projection}
An {\em equational constraint (EC)} is an equation logically implied by the quantifier-free matrix of the prenex input formula $\phi^*$).  The idea of using an EC to reduce the projection operations and lifting process 
in CAD-based QE originated with Collins \cite{Collins1996, Collins1998}.
The basic idea is that, if $A \subset \Z[x_1, \ldots, x_n]$ is the set of
polynomials in $\phi^*$ and $f(x_1, \ldots, x_n) = 0$ is an EC (with $f$
assumed to be squarefree for simplicity), then the projection of $A$
may consist of only the discriminant of $f$, (a sufficient subset of) the coefficients of $f$,
and the resultant of $f$ and $g$, for each polynomial $g \in A \setminus \{f\}$ (where each $g$ is assumed to be prime to $f$, for simplicity).
Adjustment of the basic idea to take account of the factorization of
$f$ and the other polynomials $g$ into irreducible polynomials can 
easily be made. While this description of the idea focuses on the {\em first} projection (that is, elimination of the last variables $x_n$),
Collins implied that a similar reduction could be made for
subsequent projection steps, provided that suitable ECs are available
either explicitly or implicitly.
However, while the key claim of these initial works on the topic is
true in the special case $n = 2$ (proved below in
Proposition \ref{Bivariate case}), the validity of the claim
for $n > 2$ was not addressed in those original works: in short, the improvement was at 
that stage intuitive and conjectural.
The conference papers \cite{McCallum1999, McCallum2001} endeavoured to
justify the key claim rigorously, but succeeded in
providing only partial justification for the original approach
for $n > 3$ (as well its complete verification for $n = 3$).
The reader is referred to the comprehensive work \cite{EBD2020}
for a detailed synthetic review of this subject.

We recall the main fundamental lemma from \cite{McCallum1999} for the partial validation of the reduced equational projection.

\begin{theorem}[Theorem 2.2 in \cite{McCallum1999}] \label{Main lemma for reduced equational projection}
Let $n \ge 2$, let $f(x_1, \ldots, x_n)$ and $g(x_1, \ldots, x_n)$
be real polynomials of positive degrees in the main variable $x_n$,
let $R(x_1, \ldots, x_{n-1})$ be the resultant of $f$ and $g$, and suppose
that $R \neq 0$. Let $S$ be a connected subset of $\R^{n-1}$ on
which $f$ is delineable and in which $R$ is order-invariant.
Then $g$ is sign-invariant in each section of $f$ over $S$.
\end{theorem}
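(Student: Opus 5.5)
Fix a section $\sigma$ of $f$ over $S$, given as the graph of a continuous (indeed analytic, by delineability in the sense of \cite{McCallum1988}) function $\theta : S \to \R$. Since $\sigma$ is connected, it suffices to show that $g$ is sign-invariant near each point of $\sigma$, i.e.\ that the set $\sigma \cap \{g=0\}$ is both open and closed in $\sigma$. Closedness is immediate from continuity of $g$. Openness is also immediate at points where $g \neq 0$, again by continuity. So the real content is the remaining case: if $g$ vanishes at some point of $\sigma$, then it vanishes at all nearby points of $\sigma$. Concretely, I would fix $\alpha \in S$ with $g(\alpha,\theta(\alpha)) = 0$ and show $g(\beta,\theta(\beta))=0$ for all $\beta \in S$ near $\alpha$.

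The key tool is the classical resultant identity $R = A f + B g$ with $A,B \in \R[x_1,\ldots,x_n]$, together with the expression of $R$ via roots. Over $\C$, near $\alpha$, write the roots of $f(\beta, x_n)$ continuously; $\theta(\beta)$ is one of them, of constant multiplicity by delineability. Then $R(\beta)$ factors, up to leading coefficient powers, as a product over roots $\xi$ of $f(\beta,\cdot)$ of $g(\beta,\xi)$. Let $m = \operatorname{ord}_\alpha R$, which is also the order of $R$ at every point of $S$ near $\alpha$, since $R$ is order-invariant in $S$.

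I would argue by contradiction. Suppose that points $\beta \in S$ arbitrarily close to $\alpha$ have $g(\beta,\theta(\beta)) \neq 0$. Then I would compare the order of $R$ at $\alpha$ with its order at such a $\beta$. The factor $g(x,\theta(x))$ vanishes at $\alpha$, so it contributes positive order at $\alpha$. At $\beta$, the corresponding factor $g(\beta,\theta(\beta))$ (raised to the constant multiplicity of $\theta$) is nonzero. The other root-factors cannot gain order when moving from $\alpha$ to a nearby $\beta$, by upper semicontinuity of vanishing order. This would give $\operatorname{ord}_\beta R < \operatorname{ord}_\alpha R$, contradicting order-invariance.

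The main obstacle is making this order comparison rigorous. The root factors are not analytic in general: roots of $f$ other than $\theta$ may coalesce over $S$, and complex roots are only continuous. So I would not rely on factorization globally. Instead I would use the McCallum-style argument: treat $R$ restricted to $S$ near $\alpha$ as an analytic function on the submanifold containing $S$, and pass to a one-dimensional analytic curve $\gamma(t)$ through $\alpha$ in $S$, transverse enough that $\operatorname{ord}_0 R(\gamma(t))$ equals $m$. Along such a curve, Puiseux expansions make all root factors meromorphic in a fractional power of $t$. One then shows that $g(\gamma(t),\theta(\gamma(t)))$ is either identically zero or contributes strictly positive order at $t=0$. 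In the latter case the order at nearby points drops. I would try first to establish this step via the Zariski/McCallum lemma that an order-invariant analytic function vanishing on a point of a connected set has constant order there, and that the factor $g(x,\theta(x))$ divides $R$ analytically (modulo a unit) on $S$, by the identity $R = Af+Bg$ restricted to the graph of $\theta$, where $f$ vanishes identically so $R = Bg$ on $\sigma$.
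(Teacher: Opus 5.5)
Your reduction to the local statement is correct: if $g(\alpha,\theta(\alpha))=0$, then $g$ should vanish on $\sigma$ over points of $S$ near $\alpha$. Your heuristic is also the right intuition: the ``$\theta$-factor'' of $R$ loses order when it stops vanishing, while the other factors cannot gain order. But neither device you propose for making this rigorous works.

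In the only nontrivial case $R(\alpha)=0$. Order-invariance then gives $\mathrm{ord}_x R\ge 1$, hence $R(x)=0$, for \emph{every} $x\in S$. So $R(\gamma(t))\equiv 0$ for every curve $\gamma$ lying in $S$, and no such curve can satisfy $\mathrm{ord}_0 R(\gamma(t))=\mathrm{ord}_\alpha R<\infty$. The order of $R$ at points of $S$ is a property of $R$ in directions transverse to $S$, and it is invisible from inside $S$. Besides, $S$ is only a connected set and $f$ is only assumed delineable, so $S$ need not contain analytic curves and $\theta$ is merely continuous.

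The identity $R=Af+Bg$ fails for the same reason. Restricted to $\sigma$ it reads $0=B(x,\theta(x))\,g(x,\theta(x))$ on $S$, which says nothing wherever $B$ vanishes along $\sigma$. For example, take $n=3$, $f=x_3^2-x_1$, $g=x_3$, and $S$ the line $x_1=0$. Then $R=-x_1$, $A=1$, $B=-x_3$, and $B(x,\theta(x))\equiv 0$ on $S$. More basically, $g(x,\theta(x))$ is defined only on $S$, so saying it ``divides'' $R$ carries no information about $\mathrm{ord}_x R$ in $\R^{n-1}$.

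The missing idea is a factorisation of $R$ into functions analytic on an \emph{open} neighbourhood $U$ of $\alpha$ in $\R^{n-1}$, one factor of which isolates $\theta$.
\begin{itemize}
\item By the extended (non-monic) Hensel lemma, near $(\alpha,\theta(\alpha))$ write $f=q\,h$. Here $q,h$ are polynomials in $x_n$ with coefficients analytic in $U$, $h$ is monic of degree $m$ (the multiplicity of $\theta(\alpha)$ as a root of $f(\alpha,x_n)$), $h(\alpha,x_n)=(x_n-\theta(\alpha))^m$, and $q\neq 0$ near $(\alpha,\theta(\alpha))$.
\item Then $R=\mathrm{res}_{x_n}(q,g)\,\mathrm{res}_{x_n}(h,g)$ in $U$ (Theorem 3 of \cite{Loos1982}).
\item Lemma~A.3 of \cite{McCallum1988} (orders add, each is upper semicontinuous, and all are finite since $R\neq 0$) shows that $v=\mathrm{res}_{x_n}(h,g)$ is order-invariant in $S$ near $\alpha$. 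This is the rigorous form of ``the other factors cannot gain order''. All roots far from $\theta(\alpha)$ are absorbed into the analytic function $\mathrm{res}_{x_n}(q,g)$, so the coalescence of roots that worried you is harmless.
\item Delineability (constant multiplicity $m$ of $\theta$) and continuity of $\theta$ force $h(x,x_n)=(x_n-\theta(x))^m$ for $x\in S$ near $\alpha$. Hence $v(x)=\pm g(x,\theta(x))^m$ there (Theorem 1 of \cite{Loos1982}).
\item Since $v(\alpha)=0$, $v$ has positive order, and therefore vanishes, at every point of $S$ near $\alpha$.
\end{itemize}
This is exactly the Hensel/Loos/Lemma~A.3 pattern the paper uses in its proof of Theorem~\ref{Variation of main lemma from M 1999}.
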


The above theorem underpins the validity argument for Collins'
original, fully reduced, equational projection scheme for the
\emph{first} projection step (the different invariance properties of the hypothesis and conclusion block its repeated application for subsequent steps).
Theorem 2.3 of \cite{McCallum1999} formalizes this validation argument.
In general, subsequent projection steps
can use only a semi-restricted projection scheme, when
suitable equational constraints (explicit or implicit) are available
\cite{McCallum2001, EBD2020}. Section 4 below presents certain
extensions of the above theorem, as well as some other new results, 
which are useful in validating
the original, fully reduced equational projection for both the first
and second projection steps, under some special assumptions.

For the reader's convenience, we summarize below the definitions of
the fully- and semi-reduced equational projection sets
from \cite{McCallum1999, McCallum2001, EBD2020}.
Here, $A$ denotes an irreducible basis in $\Z[x_1, \ldots, x_n]$,
$E \subset A$, and $P(A)$ denotes the projection set defined
in \cite{Brown2001} (the ``Brown-McCallum projection''):
\begin{align*}
P_E(A) &= P(E) \cup \{\mathrm{res}_{x_n} (f,g)~|~ f \in E \wedge
                        g \in A \setminus E\}; \\
P_E^*(A) &= P_E(A) \cup \mathrm{discr}(A \setminus E)
                    \cup \mathrm{ldcf}(A \setminus E).
\end{align*}
\begin{remark}
    The sets $\mathrm{discr}(A \setminus E)$ and 
    $\mathrm{ldcf}(A \setminus E)$ represent sets of discriminants
    and leading coefficients, respectively, with respect to $x_n$ of elements of $A \setminus E$.
\end{remark}
\begin{remark}
    When $A \subset \Z[x_1, \ldots, x_n]$ is arbitrary, $P_E(A)$ is
    defined to be $\mathrm{cont}(A) \cup P_F(B)$, where 
    $\mathrm{cont}(A)$ denotes the set of non-zero contents with respect to $x_n$ of the elements of $A$, $B$ is the finest squarefree basis 
    for the set $\mathrm{prim}(A)$ of all primitive parts with respect to
    $x_n$ of the elements of $A$ which have positive degree in $x_n$,
    and $F$ is the finest squarefree basis for $\mathrm{prim}(E)$.
    For $A$ arbitrary, $P_E^*(A)$ is defined analogously.
\end{remark}
\begin{remark}
    The semi-reduced equational projection set $P_E^*(A)$ was first defined in \cite{McCallum2001}, 
    but a suitable coefficient set for $A \setminus E$
    was erroneously omitted; this was corrected in \cite{EBD2020}.
\end{remark}
\begin{remark}
    In the sources cited for the above definitions, $P(A)$ denoted
    the McCallum projection set. However, the use of the Brown-McCallum
    projection set instead is likely to be more efficient. It remains
    valid for CAD-based QE where the originally defined equational
    projection sets are valid, provided that the set of polynomials 
    in the input formula is well-oriented \cite{McCallum1998},
    the finitely many points over which projection factors
    vanish identically are computed,
    and such points are ``added'' during the lifting phase of CAD 
    construction \cite{Brown2001}.
\end{remark}

It was stated in \cite{McCallum1999} (and recalled above) 
in relation to Collins' original,
fully reduced, equational projection scheme that
``the validity of the scheme is certainly clear in the bivariate case''.
However, no general proof of this claim was provided in that work,
nor has it been supplied in any subsequent paper (as far as we know).
So, for completeness, we offer here the following slight strengthening
of Theorem 2.3 of \cite{McCallum1999} in the bivariate case.
For convenience, we use the notion of an {\em irreducible basis}
in a polynomial ring \cite{MPP2019}.

\begin{proposition} \label{Bivariate case}
Let $A$ be an irreducible basis in $\Z[x,y]$.
Let $E \subseteq A$ and let $S$ be an open interval in $\R^1$.
Suppose each element of $P_E(A)$ is order-invariant in $S$.
Then each element of $E$ is analytic delineable on $S$,
the sections over $S$ of the elements of $E$ are pairwise disjoint,
and each element of $A$ is order-invariant in every such section.
\end{proposition}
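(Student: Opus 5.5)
The key point is that in the bivariate case order-invariance in an open interval $S \subset \R^1$ is very restrictive: a univariate polynomial that is order-invariant on $S$ either vanishes identically or has no zeros on $S$. This is what collapses the usual difficulty, since it turns the hypothesis into plain non-vanishing on $S$.

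First I will unpack $P_E(A)$. It contains $P(E)$, which here is the Brown--McCallum projection of $E$. That set consists of the leading coefficients, discriminants and pairwise resultants with respect to $y$ of elements of $E$ of positive degree in $y$. It also contains $\mathrm{res}_y(f,g)$ for $f \in E$ and $g \in A \setminus E$. Elements of $A$ of degree $0$ in $y$ are irreducible polynomials in $x$; if such an element lies in $E$, it is included in $P(E)$ as a content/level-one polynomial.

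Because $A$ is an irreducible basis, distinct elements are relatively prime and each is squarefree. Hence each discriminant and each resultant of distinct elements of positive degree is a non-zero polynomial in $x$. Each leading coefficient is non-zero as well. By the opening observation, each of these is therefore nowhere zero on $S$.

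Delineability of each $f \in E$ then follows from standard facts. The leading coefficient of $f$ is non-zero on $S$, and $\mathrm{disc}_y f$ is non-zero on $S$. So for every $a \in S$, $f(a,y)$ has constant degree and distinct roots. The roots of a polynomial depend continuously on its coefficients, so the real roots vary continuously, stay distinct, and cannot turn complex without colliding. The implicit function theorem, with $\partial f/\partial y \neq 0$ at simple roots, makes each root function analytic. This gives analytic delineability of $f$ on $S$.

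Disjointness of the sections of distinct $f_1, f_2 \in E$ follows from $\mathrm{res}_y(f_1,f_2) \neq 0$ on $S$, since a common root at $x=a$ would force the resultant to vanish at $a$.

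It remains to show order-invariance of each $g \in A$ in each section $\sigma$ of some $f \in E$. There are three cases.

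\textbf{Case $g = f$.} Here $g \equiv 0$ on $\sigma$. Since $\partial f/\partial y \neq 0$ along $\sigma$, $g$ has order exactly $1$ throughout $\sigma$.

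\textbf{Case $g \in E$, $g \neq f$, of positive degree in $y$.} The resultant of $f$ and $g$ is non-zero on $S$. So $g$ does not vanish on $\sigma$ and has order $0$ throughout.

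\textbf{Case $g \in A \setminus E$.} The same argument applies, now using $\mathrm{res}_y(f,g) \neq 0$ on $S$.

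Elements $g$ of $A$ of degree $0$ in $y$ need separate comment. If $g \in E$, it lies in $P_E(A)$ and hence has no zeros on $S$. If $g \notin E$, then $\mathrm{res}_y(f,g) = g^{\deg_y f}$, so $g$ again has no zeros on $S$. In either case its order is $0$ on $\sigma$.

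The expected obstacle is this bookkeeping: making sure that every degree-$0$ element and every leading coefficient is actually covered by the definition of $P_E(A)$, including the contents convention in the Remarks. The analytic content itself is routine.
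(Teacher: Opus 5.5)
Your proof is correct, but it takes a more elementary route than the paper's.

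\emph{The paper's route.} The paper uses the open-interval observation (order-invariant on $S$ implies nowhere zero) in only two places: for the leading coefficients $l_i$, and for the resultants $\mathrm{res}_y(f_i,g)$ with $g \in A \setminus E$. For the rest it forms the product $f$ of all elements of $E$. It notes that $\mathrm{discr}_y(f)$ is the product of the $d_i$ and the squares of the $r_{i,j}$, and hence is order-invariant in $S$. It then invokes McCallum's lifting theorem (Theorem 2 of \cite{McCallum1998}) to get analytic delineability of $f$ and order-invariance of $f$ in its sections. Delineability of each element of $E$ and pairwise disjointness of sections come from Lemma A.7 of \cite{McCallum1988}, and order-invariance of each factor from Lemma A.3.

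\emph{Your route.} You apply the observation uniformly. Every discriminant and every resultant involved is a non-zero polynomial, so all of them are nowhere zero on $S$. Hence all roots are simple, and continuity of roots plus the implicit function theorem gives analytic delineability directly. Disjointness and the orders are then read off: exactly $1$ for an element on its own sections, $0$ for everything else.

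\emph{What each buys.} Your version is self-contained and gives sharper information. You also treat elements of degree $0$ in $y$ explicitly, which the paper's proof passes over silently. The paper's version is shorter given the citations, and its shape mirrors the general-$n$ validation argument (Theorem 2.3 of \cite{McCallum1999}). However, in this one-dimensional base case the extra generality of McCallum's theorem, which allows the discriminant to vanish on $S$, is never actually used.
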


\begin{proof}
    Let $f_i(x,y) \in E$. By definition of $P_E(A)$, the leading
    coefficient $l_i(x)$ of $f_i$ belongs to $P_E(A)$.
    Since $S$ is an open interval of the real line
    and $l_i(x)$ is order-invariant in $S$, by hypotheses,
    it follows that $l_i(x)$ vanishes nowhere in $S$.
    Also, by hypothesis, the discriminant $d_i(x)$ of $f_i(x,y)$
    is order-invariant in $S$; and the resultant $r_{i,j}(x)$
    of $f_i$ and every other $f_j \in E$ is order-invariant in $S$.
    Let $f(x,y)$ denote the product of all the $f_i(x,y)$ in $E$.
    Then the leading coefficient $l(x)$ of $f(x,y)$,
    which is the product of all the $l_i(x)$,
    vanishes nowhere in $S$. Therefore $f(x,y)$ is degree-invariant
    in $S$, of positive degree. Also, the discriminant $d(x)$
    of $f(x,y)$, which is the product of all the $d_i(x)$ and
    the squares of all the $r_{i,j}(x)$ 
    (where all the $d_i(x)$ and all the $r_{i,j}(x)$ are elements of
    $P_E(A)$), is order-invariant in $S$. Hence, by Theorem 2 of
    \cite{McCallum1998},
    $f$ is analytic delineable on $S$ and is order-invariant
    in each of its sections over $S$.
    The first two conclusions of the proposition follow
    by Lemma A.7 of \cite{McCallum1988}.
    The order-invariance of each element of $E$ in every section
    of $f$ over $S$ follows by Lemma A.3 of \cite{McCallum1988}.
    Finally, consider an arbitrary element $g \in A \setminus E$,
    and again let $f_i \in E$.
    By definition of $P_E(A)$, we know $\mathrm{res}_y(f_i,g) \in P_E(A)$.
    Therefore, by the order-invariance hypothesis,
    this polynomial vanishes nowhere in $S$.
    It follows that $g$ vanishes at no point of a section of $f_i$
    over $S$; hence $g$ is order-invariant (of order 0)
    in every section of $f_i$ over $S$.
\end{proof}






\newpage

\section{Framework for enhanced CAD-based QE
            for parametric systems with many ECs}
Recall that we assume throughout this paper that our given input formula
$\phi^*$ has the form
\[(\exists x_{n-k+1}) \cdots (\exists x_n) \phi(x_1, \ldots, x_n),\]
where $\phi \equiv [f_1 = 0 \wedge \cdots \wedge f_t = 0] \wedge \psi$,
for some quantifier-free formula $\psi$ (see Section 1.1).  
Recall also that we denote the free variables of $\phi^*$ by $x_1, \ldots, x_s$,
where $s = n-k$, and let us suppose $t \ge k$.
For this subsection we shall further assume that an initial segment of
our list of $n$ variables is partitioned into $p$ {\em parameters} and
$u$ {\em unknowns} thus: $x_1, \ldots, x_p, x_{p+1}, \ldots, x_{p+u}$.
In general, the only restrictions on $p$ and $u$ are that $p \le s$ and
$p+u \le n$. However, for simplicity, we shall consider only the
case $p = s \ge 1$, $u \ge 1$ and $p+u = n$. Adjustment of our framework described below for other cases which may arise in practice would be straightforward.

We will present our proposed CAD-based method 
for finding generic solutions of a parametric
system given by a quantified formula $\phi^*$ as described above.
We first formalize our terminology relating to solutions.
\begin{definition}
Let $\alpha = (\alpha_1, \ldots, \alpha_s) \in \R^s$ be a specific
real parameter vector. We shall call any specific vector of real
unknowns $\beta = (\beta_{s+1}, \ldots, \beta_n) \in \R^k$
for which $\phi(\alpha, \beta)$ is true a
{\em solution vector for $\phi$ associated with} $\alpha$.
\end{definition}
\begin{definition}
A {\em system of generic solutions} for $\phi^*$
consists of a list $\mathcal{O}$ of pairwise disjoint
connected open subsets of parameter space $\R^s$ such that:
\begin{enumerate}
 \item the closure of $\cup \mathcal{O}$ is $\R^s$;
 \item for every subset $S \in \mathcal{O}$, we have that for any $\alpha \in S$, the number of distinct solution vectors for $\phi$ associated
    with $\alpha$ is a constant, $\nu_S$ of 
    $\N \cup \{\infty\}$, as $\alpha$ varies within $S$; and
  \item for each subset $S \in \mathcal{O}$ for which $\nu_S$ is finite
    and positive, the $\nu_S$ distinct solution vectors for $S$ 
    are expressible as continuous functions of the parameters in $S$.
\end{enumerate}
\end{definition}


Our method for finding a description of a system of generic
solutions for a given $\phi^*$ is described formally 
in Algorithms 1 and 2, the latter being a sub-algorithm of the former. 
Our method is essentially a synthesis of existing methods and concepts
(such as QE by partial CAD, equational constraint projection, etc.),
together with an extension to the stack construction phase of partial CAD.
A key notion for the method is that of a {\em pivot} constraint
\cite{Collins1998}:
during the projection phase, when more than one equational constraint
is available at a given level, one such must be chosen and designated
as the pivot constraint at that level.
The next projection set computed is then semi-restricted with respect to
the (set of irreducible factors of the) pivot, as reviewed in Section 2.1.
As a prelude to our description of Algorithm 1 we present a foundational
result that underpins it.

\begin{proposition} \label{Invariance of solution vector cardinality}
    Let $\mathcal{D}$ be a truth-invariant CAD of $\R^n$ 
    for $\phi$ and let $\mathcal{D}_s$ denote the CAD of $\R^s$ induced 
    by $\mathcal{D}$.
    Let $c$ be a cell of $\mathcal{D}_s$,
    and let $\alpha \in c$. Then the total number $\nu_{c, \alpha}$
    of distinct solution vectors $\beta \in \R^k$ for $\phi$
    over $\alpha$ is an element of 
    $\N \cup \{\infty\}$, constant as $\alpha$ varies in $c$:
    hence we may define $\nu_c$, the cardinality of the set of
    solution vectors associated with the points of $c$, to be this number.
    Moreover, in the case that $0 < \nu_c < \infty$, the $\nu_c$ distinct
    solution vectors for $c$ are expressible as continuous functions
    of the parameters in $c$.
\end{proposition}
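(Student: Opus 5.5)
We would prove Proposition~\ref{Invariance of solution vector cardinality} by working directly with the cylindrical structure of $\mathcal{D}$ over the cell $c$. Since $\mathcal{D}$ is cylindrical, the cells of $\mathcal{D}$ lying over $c$ form a finite family. Each such cell $C$ projects onto $c$. Moreover, by the standard structure theorem for CADs (repeated stacking of sections and sectors), there is a homeomorphism
\[ h_C : c \times T_C \to C, \qquad h_C(\alpha, \tau) = (\alpha, \theta_C(\alpha,\tau)), \]
where $T_C$ is a point or an open cube $(0,1)^m$ for some $0 \le m \le k$. The integer $m$ is the number of sector coordinates in positions $s+1, \ldots, n$ when the cell index of $C$ is restricted to those positions. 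This parametrisation is built inductively on the level. A section over a cell $c'$ is the graph of a continuous function $\xi$ on $c'$, so it adds no parameter. A sector between $\xi_1 < \xi_2$ (possibly $\pm\infty$) is parametrised by $\tau \in (0,1)$ via a continuous affine (or, for unbounded sectors, $\tan$-type) interpolation between $\xi_1$ and $\xi_2$.

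Truth-invariance says $\phi$ is constant on each cell. Hence the solution set over $\alpha$ is the disjoint union, over the \emph{true} cells $C$ above $c$, of the fibres $C_\alpha = \{\beta : (\alpha,\beta) \in C\} = \theta_C(\alpha, T_C)$.
\begin{itemize}
\item If some true cell has $m \ge 1$, then every fibre $C_\alpha$ is homeomorphic to $(0,1)^m$ and hence infinite, for every $\alpha \in c$. So $\nu_{c,\alpha} = \infty$ throughout $c$.
\item Otherwise every true cell has $m = 0$ and contributes exactly one solution vector over each $\alpha$. Then $\nu_{c,\alpha}$ is the number of true cells above $c$, which is independent of $\alpha$.
\end{itemize}
This gives the constancy claim, with the value lying in $\N \cup \{\infty\}$. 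The fibres of distinct cells are disjoint, so there is no double counting.

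For the continuity claim, in the case $0 < \nu_c < \infty$ each true cell $C$ is a section at all levels $s+1,\ldots,n$. It is therefore the graph over $c$ of the map $\alpha \mapsto (\xi_{s+1}(\alpha), \xi_{s+2}(\alpha,\xi_{s+1}(\alpha)), \ldots)$. This map is a composition of the continuous section functions supplied by the CAD definition, so it is continuous. These $\nu_c$ maps are the required solution vectors.

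We expect the main obstacle to be the foundational one: justifying that the section-defining functions are continuous and that cells over $c$ really have the product form $c \times T_C$. This depends on which definition of CAD is in force. We would invoke the standard definition (as in \cite{ACM1984}), in which sections over a cell are graphs of continuous functions on that cell and sectors are the regions between consecutive ones. From that definition the inductive homeomorphism follows routinely. The rest of the argument is bookkeeping.
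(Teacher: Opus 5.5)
Your argument is correct, but it is organised differently from the paper's. The paper proves constancy by induction on the number $k$ of bound variables. Its base case treats cells of $\mathcal{D}_{n-1}$: $\nu$ is infinite if some true sector lies over $c$, and otherwise equals the number of true sections. Its inductive step, for $c \in \mathcal{D}_{n-k}$, uses the already-defined values $\nu_{\hat c}$ of the cells $\hat c$ in the stack over $c$ in $\mathcal{D}_{n-k+1}$. Then $\nu_c=\infty$ if some $\hat c$ has $\nu_{\hat c}=\infty$ or some sector $\hat c$ has $\nu_{\hat c}>0$, and otherwise $\nu_c$ is the sum of $\nu_{\hat c}$ over the sections $\hat c$. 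The continuity claim is only asserted to follow ``by analogy''.

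You instead go straight to the level-$n$ cells above $c$ and write the solution set over $\alpha$ as the disjoint union of the fibres $C_\alpha$ of the true cells. This gives in one step a closed characterisation: $\nu_c=\infty$ if and only if some true cell over $c$ has a sector index at some level in $s+1,\ldots,n$, and otherwise $\nu_c$ is the number of true cells over $c$. It also makes the continuity claim immediate, since each such cell is the graph of an iterated composition of continuous section functions; that is the part the paper leaves to analogy.

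The price is the product parametrisation $c\times T_C\cong C$, which is more than you need. For the infinite case it suffices that each cell projects onto the cell below it and that a sector's fibre over any base point is a nonempty open interval, which is all the paper's induction uses. The paper's recursive form, in turn, mirrors the stack-by-stack computation of $\nu_c$ over the tree $\mathcal{T}_c$ in Step 5 of Algorithm 1. It therefore doubles as a description of how $\nu_c$ is actually computed from partial stack constructions.
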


\begin{proof}
    Recall that $1 \le k \le n-1$ and $s = n - k$ (from the preamble). 
    Define the proposition
    $\mathcal{P}(k)$ as follows: for every cell $c \in \mathcal{D}_{n-k}$
    and for every $\alpha \in c$, the number $\nu_{c, \alpha}$
    is a constant element of $\N \cup \{\infty\}$, as $\alpha$
    varies in $c$. We shall prove that $\mathcal{P}(k)$ is true for all
    $k$ in the range $1 \le k \le n-1$ by induction on $k$.

    The induction base is proved as follows.
    Let $c \in \mathcal{D}_{n-1}$. First suppose that the cylinder over $c$
    contains some sector-cell $\hat{c}$ of $\mathcal{D}$
    for which $\phi$ is true throughout $\hat{c}$.
    Take any $\alpha \in c$.
    Then, by the definition of sector-cell,
    there are infinitely many $\beta \in \R$ such that
    $(\alpha, \beta)$ lies in $\hat{c}$, hence for which
    $\phi(\alpha, \beta)$ is true.
    Therefore, $\nu_{c, \alpha} = \infty$.
    We have shown that $\nu_{c, \alpha}$ is constant (and infinite)
    for all $\alpha \in c$.
    Now suppose, on the other hand, that the cylinder over $c$
    contains no sector-cell for which $\phi$ is true.
    Take any $\alpha \in c$: then $\nu_{c, \alpha}$ equals the total number of section-cells
    of $\mathcal{D}$ over $c$ for which $\phi$ is true, a number clearly
    independent of $\alpha$. Again, we have shown that $\nu_{c, \alpha}$
    is constant for all $\alpha \in c$ (but in this case, it is finite).
    The induction base $\mathcal{P}(1)$ is now established.

    For the induction step, let $k > 1$ and assume $\mathcal{P}(k-1)$ is true
    (as induction hypothesis). We must show that $\mathcal{P}(k)$ holds.
    Let $c \in \mathcal{D}_{n-k}$.
    Suppose that the cylinder over $c$ contains some 
    cell $\hat{c} \in \mathcal{D}_{n-k+1}$ for which $\nu_{\hat{c}}$
    is infinite (noting that the notation $\nu_{\hat{c}}$
    is justified by the induction hypothesis).
    Then the number $\nu_{c, \alpha}$, for points $\alpha \in c$,
    is constant and infinite also.
    So, we may assume henceforth that the cylinder over $c$ contains
    no cell $\hat{c}$ in $\mathcal{D}_{n-k+1}$ for which $\nu_{\hat{c}}$
    is infinite. Two cases remain to be considered.
    Firstly, consider the case in which the cylinder over $c$
    contains some sector-cell $\hat{c} \in \mathcal{D}_{n-k+1}$
    for which $\nu_{\hat{c}}$ is positive
    (noting again that the notation is justified by the induction
    hypothesis). Then the number $\nu_{c, \alpha}$, for points
    $\alpha \in c$, is constant and infinite. The reason is that,
    with $\alpha \in c$ arbitrary,
    for each of the infinitely many $\beta_{s+1} \in \R$
    such that $(\alpha, \beta_{s+1}) \in \hat{c}$,
    there is a solution vector $(\beta_{s+2}, \ldots, \beta_n)$
    for $\phi$ associated with $(\alpha, \beta_{s+1})$.
    Secondly, consider the case in which the cylinder over $c$
    contains no such sector-cell. In this case, the number
    $\nu_{c, \alpha}$, for points $\alpha \in c$, equals the sum 
    over the sections $\hat{c}$ of $\mathcal{D}_{n-k+1}$
    of the finite quantities $\nu_{\hat{c}}$ (noting again the use of the
    induction hypothesis), a sum which is independent of $\alpha$.
    This completes the proof of the induction step,
    establishing the truth of $\mathcal{P}(k)$ for all $k$, with $1 \le k \le n-1$.

    Define the proposition $\mathcal{Q}(k)$ as follows:
    for every cell $c \in \mathcal{D}_{n-k}$,
    for which $0 < \nu_c < \infty$,
    the $\nu_c$ distinct solution vectors for $c$ are expressible
    as continuous functions of the parameters in $c$.
    Then $\mathcal{Q}(k)$ may be proved true for all $k$ in the range 
    $1 \le k \le n-1$ by induction on $k$,
    by analogy with (and suitable modification of)
    the proof for proposition $\mathcal{P}$ provided above.
    
\end{proof}

\begin{algorithm}
\caption{Generic Solutions for a Parametric System}

\medskip
\noindent 
    $(\mathcal{I}, \mathcal{S}, \mathcal{N}, \mathcal{F}) \leftarrow \mathrm{GSPS}(\phi^*)$

\medskip
\emph{Input}: $\phi^*$ is a formula of Tarski algebra of the form
            $\phi^* \equiv (\exists x_{n-k+1}) \cdots (\exists x_n)
            \phi(x_1, \ldots, x_n)$, where
        $\phi \equiv [f_1 = 0 \wedge \cdots \wedge f_t = 0] \wedge \psi$,
        for some quantifier-free formula $\psi$, and $t \ge k \ge 1$.
        With $s = n-k \ge 1$, $\phi^*$ is considered to be a parametric system in which $x_1, \ldots, x_s$ are parameters and 
        $x_{s+1}, \ldots, x_n$ unknowns.

\medskip
\emph{Output}: ${\mathcal I}$ and $\mathcal{S}$ are lists of indices
and sample points, respectively, 
for the $s$-cells of the CAD $\mathcal{D}_s$ of $\R^s$ induced by a
truth-invariant CAD $\mathcal{D}$ of $\R^n$ for $\phi$.
$\mathcal{N}$ is the list of numbers $\nu_c$, for each cell $c$ of
$\mathcal{D}_s$ (where $\nu_c$ is defined by the above proposition).
${\mathcal F}$ is a list $\{F_c~|~c \in {\mathcal D} \wedge
0 < \nu_c < \infty\}$ such that every $F_c$ is a list of
suitable functional expressions of the $\nu_c$ associated unknown vectors
in terms of the parameter vector $x \in c$.

\medskip
\begin{itemize}
    \item[1.] \textbf{Initialization.} Set $A \leftarrow ~$ the set of 
    polynomials which occur in $\phi$ and
    $E \leftarrow \{f_1, \ldots, f_t\}$. 
    Set $J_n \leftarrow A$ and $C_n \leftarrow$ the set of level $n$ polynomials in $E$.
    If $C_n \neq \emptyset$ then 
        set $e_n \leftarrow$ to an element of $C_n$ and
        $E_n \leftarrow \{e_n\}$ 
    else
        set $e_n \leftarrow 0$. 
    [$C_n$ is the candidate set of
    level $n$ EC polynomials, and $e_n$ is the level $n$ pivot
    if an EC exists.]

\medskip
     \item[2.] \textbf{Compute remaining pivots.}
     For $j \leftarrow n-1$ down to $n-k+1$: 
        set $C_j \leftarrow$ the candidate set of level $j$ EC polynomials;
        if $C_j \neq \emptyset$ then 
            set $e_j \leftarrow$ an element of $C_j$ and $E_j \leftarrow \{e_j\}$ 
            else set $e_j \leftarrow 0$.
     [$C_{n-1}$ is the union of the set of all level $n-1$ elements
     of $E$ and the set of level $n-1$ resultants with respect to $x_n$
     of pairs of distinct elements of $C_n$, for which one element
     of each pair is $e_n$. General definition of $C_j$ is needed.]

\medskip
     \item[3.] \textbf{Compute projection sets.}
     For $j \leftarrow n$ down to $n-k+1$: 
     if $e_j \neq 0$ then 
        set $J_{j-1} \leftarrow P_{E_j}^*(J_j)$;
     else
        set $J_{j-1} \leftarrow P(J_j)$.
     [Where possible, in the case that $e_j \neq 0$, 
     replace $P_{E_j}^*(J_j)$ by the fully
     restricted projection set $P_{E_j}(J_j)$. See Section 2.1
     for a review of the definitions of these projection sets.
     Note, in particular, that construction of each set $J_{j-1}$ entails
     computation of $\mathrm{cont}(J_j)$, $\mathrm{prim}(J_j)$, 
     and irreducible basis for the latter.]

\medskip
     \item[4.] \textbf{Analyse projection factor coefficient sets.}
     For each projection factor of level 3 upwards, examine zeros
     of the coefficient system. If a positive dimensional solution set is found then report $A$ as not well-oriented and exit. Otherwise solve the coefficient system and retain solutions for lifting \cite{Brown2001}.

\end{itemize}
\end{algorithm}

\setcounter{algocf}{0}

\begin{algorithm}[ht]
\caption{continued}
\begin{itemize}

     \item[5.] \textbf{Compute open cells of a $J_s$-invariant 
     CAD of parameter space $\R^s$
     together with solution vector cardinality information.}
     Compute lists $\mathcal{I}$ and $\mathcal{S}$ for the $s$-cells
     of a CAD ${\mathcal D}_s$ of $\R^s$ such that every element 
     of $J_s$ is order-invariant in each cell constructed.
     [See \cite{McCallum1993} or \cite{LR07} for an
     efficient algorithm for this task.] 
     For each cell $c$ constructed, with rational
     sample point $\alpha_c \in \mathcal{S}$, construct stacks which, 
     together with $c$, form a tree $\mathcal{T}_c$
     of cells rooted at $c$ sufficient to determine the cardinality
     $\nu_c$ of the set of associated solution vectors in $\R^k$.

\medskip

     \item[6.] \textbf{Construct ${\mathcal F}$.} For each $s$-cell $c$ of ${\mathcal D}_s$ with sample point $\alpha_c \in \Q^s$, and with
     $0 < \nu_c < \infty$, use sub-algorithm $\mathrm{CFEL}$ to construct
     a list $F_c$ of functional expressions of the $\nu_c$ associated solution vectors in $\R^k$ in terms of the parameters $x \in c$.
     Exit.
\end{itemize}
    
\end{algorithm}

\begin{algorithm}[!ht]
\caption{Construct Functional Expression List}

\medskip
\noindent
$F \leftarrow \mathrm{CFEL}(\alpha, \mathcal{B}, B)$

\medskip
\emph{Input}:
$\alpha$ is the sample point of a cell $c \in \R^s$,
which is the root of a tree $\mathcal{T}_c$ of cells whose non-root nodes
are partitioned into stacks.
[We say $c$ is at level $s$, its children are at level $s+1$, etc.]
$\mathcal{B}$ is a non-empty list of vectors $\beta \in \R^k$ such that
the set of sample points for the cells of $\mathcal{T}_c$ at level $n$ is
$\{(\alpha, \beta)~|~\beta \in \mathcal{B}\}$.
$B$ is a list $(B_{s+1}, \ldots, B_n)$, in which each $B_i$ is an irreducible basis at level $i$ such that 
$\prod_{p_{i} \in B_{i}} p_{i}$
is analytic delineable over every cell of $T$ at level $i-1$.

\medskip
\emph{Output}:
$F$ is a list of functional expressions for the cells of $\mathcal{T}_c$
at level $n$ over $c$. That is, for every cell $c_n$ of 
$\mathcal{T}_c$ at level $n$,
$F$ contains the description of an analytic function 
$\theta~:~c \rightarrow \R^k$ whose graph is $c_n$.

\medskip
\begin{itemize}
    \item[1.] \textbf{Initialization.} Set $F$ to be the empty list.

    \medskip
    \item[2.] \textbf{Construct every needed functional expression.}
    Remove a vector $\beta$ from the list $\mathcal{B}$.
    For $i \leftarrow s+1$ to $n$ perform the following actions.
    (Let $p_{i}(x, x_{s+1}, \ldots, x_i) \in B_i$ be such that
    $p_{i}(\alpha, \beta_{s+1}, \ldots, \beta_i) = 0$.
    Suppose $\beta_i$ is the $k_i$th real root of 
    $p_i(\alpha, \beta_{s+1}, \ldots, \beta_{i-1}, x_i)$ and
    that real valued functions 
    $\theta_{s+1}, \ldots, \theta_{i-1}$, each with domain $c$, have been defined. Define $\theta_i~:~c \rightarrow \R$ by letting
    $\theta_i(x)$ be the $k_i$th real root of
    $p_i(x, \theta_{s+1}(x), \ldots, \theta_{i-1}(x), x_i)$.)
    Set $\theta \leftarrow (\theta_{s+1}, \cdots, \theta_n)$.
    Append $\theta$ to the list $F$.

    \medskip
    \item[3.] \textbf{Is $\mathcal{B}$ non-empty?}
    If $\mathcal{B}$ is non-empty go back to Step 2; else Exit.
\end{itemize}
\end{algorithm}

Some remarks are in order. 
First, Algorithm 1 is based upon the
equational variant of algorithm $\mathrm{QEPCAD}$ 
(Quantifier Elimination by 
Partial CAD) which is sketched in \cite{McCallum2001}. 
It is also influenced
by \cite{LR07}: the specification of our 
$\mathrm{GSPS}$ is in accord with the framework and 
concepts of \cite{LR07};
and our use of ``open CAD'' in step 5 of $\mathrm{GSPS}$ recalls its
mention in \cite{LR07}.
As in \cite{LR07}, the focus is on generic solutions of a system,
since these are likely to be of greatest practical interest
and can be computed more efficiently.
Straightforward changes can be made to the above algorithms
to yield full solutions, but at greater computational cost.
The validity of Algorithm 1 follows from the correctness of
the equational variant of $\mathrm{QEPCAD}$ which is proved in 
\cite{McCallum2001},
the validity of the Brown-McCallum projection operation \cite{Brown2001},
the correctness of the ``open CAD'' algorithm \cite{McCallum1993, 
LR07}, and Proposition \ref{Invariance of solution vector cardinality}.

Second, it must be admitted that the more informative output specified
by $\mathrm{GSPS}$ may take much longer to compute than the
relatively simpler output provided by $\mathrm{QEPCAD}$.
The reason is that, for each cell $c \in \mathcal{D}$, 
$\mathrm{QEPCAD}$ 
can abort the stack construction process based at $c$ and its 
descendants as soon as the
truth value of $\phi^*$ on $c$ is determined. However, it may require
many more stack constructions based at $c$ and its descendants to 
compute the desired number $\nu_c$, rather than just determining its
sign. The construction of the list $\mathcal{F}$ is 
likely to require a similar amount of additional work.

Third, in view of the additional work likely to be needed for complex 
problem instances, its full application may be useful only in relatively 
simple or special situations. For example, problems with a small number 
of unknown variables, or those having several unknowns but for which 
theory predicts a small number of solutions for each parameter vector,
may be treatable by our algorithmic framework. Section 5 revisits 
Examples 1 and 2, demonstrating some concrete benefits of 
$\mathrm{GSPS}$ applied to these specific problems.

Fourth, even if one is faced with a complex problem, our algorithmic 
framework applied to the problem in a partial way may yield valuable 
insight into the solutions.

Fifth, we are hopeful that further progress on the theory of equational 
projection set reduction may yield more efficiencies for the method. The 
next section reports on some limited progress in that direction.

We conclude this section with some simple examples.

\begin{example}
    Let $n = 2$ and $\phi^* \equiv (\exists y)[y^2 - x = 0]$.
    Given this input, algorithm $\mathrm{GSPS}$ yields descriptions
    of the two 1-cells $c_1 = (-\infty, 0)$ and $c_2 = (0, \infty)$
    in $\mathcal{D}_1$. For $c_1$, $\nu = 0$; while for $c_2$,
    $\nu = 2$. Put $c = c_2$, for simplicity.
     Algorithm $\mathrm{CFEL}$ then produces a list $F_c = (\theta, \psi)$,
     where $\theta : c \rightarrow \R$ is given by
     $\theta(x) = -\sqrt{x}$ and $\psi : c \rightarrow \R$
     is given by $\psi(x) = \sqrt{x}$.
     (Strictly speaking, $\theta(x)$ is expressed as 
     ``the first (smaller) real root of $y^2 - x$'', etc.)
     Informally, for the first solution over $c$,
     the unknown $y$ is expressed in terms of the parameter $x$ by
     $y = \theta(x)$; and for the second solution over $c$,
     $y = \psi(x)$.
\end{example}

\begin{example}
    Let $n = 3$ and 
    $\phi^* \equiv 
    (\exists y)(\exists z)[z^2 + y^2 + x^2 - 1 = 0 \wedge z - y = 0]$.
    Given this input, algorithm $\mathrm{GSPS}$ yields descriptions
    of the three 1-cells:\\ $c_1 = (-\infty, -1)$,
    $c_2 = (-1, 1)$ and $c_3 = (1, \infty)$.
    For $c_1$ and $c_3$, $\nu = 0$; while for $c_2$, $\nu = 2$.
    Put $c = c_2$, for simplicity.
    Algorithm $\mathrm{CFEL}$ then produces a list
    $F_c = (\theta, \psi)$.
    We have $\theta = (\theta_2, \theta_3)$, where
    $\theta_2(x) = -\sqrt{(1 - x^2)/2}$ and
    $\theta_3(x) = \theta_2(x)$.
    We have $\psi = (\psi_2, \psi_3)$, where
    $\psi_2(x) = \sqrt{(1 - x^2)/2}$ and
    $\psi_3(x) = \psi_2(x)$.
    (Again, strictly speaking, $\theta_2(x)$ is expressed as
    ``the first (smaller) real root of $2y^2 + x^2 - 1$'', etc.)
    Informally, for the first solution over $c$, the unknowns $y$
    and $z$ are expressed in terms of the parameter $x$ by
    $y = \theta_2(x)$ and $z = \theta_3(x)$; 
    and for the second solution over $c$,
    $y = \psi_2(x)$ and $z = \psi_3(x)$.
\end{example}

\newpage

\section{Progress on validating use of fully reduced equational 
projection}

In this section
we present variations of existing theorems relating to delineability. 
These are of help in justifying the use of the fully reduced equational 
projection operation beyond the first projection step 
in certain special situations, consistent with the framework
described in the previous section.

In the first subsection we review some existing published work in this 
direction \cite{MNDS20}, noting some flaws.
In the second subsection, we present new theorems relating to
delineability, which partially rectify the flaws noted in the existing
work.
In the third subsection, we describe how the new theorems justify
the use of the fully reduced projection operation for the second
projection step under suitable assumptions.

\subsection{Review of existing work, with flaws described}
Section IV of \cite{MNDS20} (intended as a report on work in progress) 
presents a delineability criterion which has the aim of
justifying a reduction in the second projection operation
when constraints $f = 0$ and $e = 0$ are available at levels $n$ and
$n-1$, respectively. This criterion
uses a proposed notion of pointwise intersection order
of two coprime multivariate polynomials with complex number coefficients.
The notion proposed is recalled as follows.
When $f, g \in \C[x_1, \ldots, x_n]$ are coprime and $p \in \C^n$,
an affine change of coordinates is first made, if needed, to ensure
$p = 0$ and $f(0, \ldots, 0, x_n)$ is not identically zero;
hence $f(0, \ldots, 0, x_n)$ vanishes at $x_n = 0$ 
to some finite order $m \ge 0$.
Then Hensel's lemma or the Weierstrass preparation theorem is applied
to find a so-called {\em distinguished polynomial}
$h \in \C[[x_1, \ldots, x_{n-1}]][x_n]$ in $x_n$
(that is, $h$ is monic in $x_n$ and $h(0, \ldots, 0, x_n) = x_n^m$)
associated to $f$ near the origin.
Finally, the intersection order of $f$ and $g$ at $p$
is defined to be the order of the resultant $\mathrm{res}_{x_n}(h,g)$.
(This concludes the recollection of the notion proposed from
\cite{MNDS20}.)
However, our present author group, slightly different from that of 
\cite{MNDS20}, have subsequently noticed three issues with the work 
reported in that section of the article.

The first issue is that the definition recalled above of the
intersection order (or intersection multiplicity/number) 
of two coprime multivariate polynomials {\em at a point}
is nowhere to be found in the available standard
texts on algebraic geometry, such as \cite{CoxLittleOShea2005,
Harris1992, Hartshorne1977, Shafarevich2013}.
In fact, no other definition of such a pointwise intersection
order is found in such texts.
While this in itself does not invalidate the proposed notion,
it is consistent with the more serious problem explained below.
Of course, the case $n = 2$ is well understood and is discussed at 
length in some of the references mentioned and other works. 
But the case $n > 2$ 
is conspicuous by its absence. (However, the intersection number of $n$ 
polynomials in $n$ variables, having only finitely many common 
solutions, at a common solution point
is developed. Also, the notion of the intersection multiplicity of two 
hypersurfaces, having no common component, {\em along an irreducible
sub-variety of their intersection} is also defined.)

The second, more serious, issue is that the concept of the intersection
order of $f$ and $g$ at $p$ defined in \cite{MNDS20} is tacitly
assumed to be independent of the coordinate system in both the very definition of the concept
given and the proof of the delineability criterion (Theorem 7 of \cite{MNDS20}) which uses the concept. 
Now such coordinate independence is indeed true in the case $n = 2$.
However, it is not true in general. A counter-example is
furnished, with $n = 3$ and $(x_1, x_2, x_3) = (p, q, x)$, 
by setting $f(p, q, x) = x^3 + p x + q$
and $g(p, q, x) = \partial f/\partial x = 3 x^2 + p$.
Then $f$ is a distinguished polynomial in $x$, and we have
$\mathrm{res}_x(f,g) = 27q^2 + 4p^3$, which
has order 2 at the origin in the $(p,q)$-plane.
However, $g$ is a distinguished polynomial in $p$, and we have
$\mathrm{res}_p(f,g) = 2 x^3 - q$, which has order 1 at the origin
in the $(q,x)$-plane.
(It is intuitively clear that the intersection 
multiplicity of $f$ and $g$
along the curve $C$ of intersection $f = g = 0$ is 1,
since $\grad f = (x, 1, 3x^2 + p)$ and $\grad g = (1, 0, 6x)$
are linearly independent vectors at each point of $C$.
So, it seems that the former resultant, namely $\mathrm{res}_x(f,g)$,
misrepresents the intuitive idea of the intersection multiplicity of 
$f$ and $g$ at the origin.)

The third and relatively minor issue is that the first sentence of the
statement of the delineability criterion (Theorem 7 of \cite{MNDS20})
referred to above apparently has a typographical error: $e$ should be a polynomial in $x_1, \ldots, x_{n-1}$ (not in $x_1, \dots x_n$ as written in \cite{MNDS20}).

The next subsection aims to correct, as far as possible, these flaws.
The foundation is to replace the unsatisfactory notion of
pointwise intersection order introduced in \cite{MNDS20} by a somewhat different, simpler, coordinate-independent concept. 
The relevant theory is, of necessity,
completely redeveloped using the new concept.

\subsection{New theorems relating to delineability}
The theory in this subsection relies upon the notion of the order of a real analytic function $g(x_1, \ldots, x_n)$ relative to the real variety $V_f$ of another real analytic function $f(x_1, \ldots, x_n)$ at a non-singular point of $V_f$. We will now formulate this relative order notion in precise terms.

\begin{definition}
    Let $f$ and $g$ be real analytic functions defined in a neighbourhood of a point $p \in \R^n$. Let $V_f$ denote the real hypersurface defined by $f = 0$ near $p$, and suppose that $p$ is a non-singular point of $V_f$; that is, we suppose $f(p) = 0$, but some first-order partial derivative of $f$, say $\partial f/\partial x_i$ does not vanish at $p$. Then, by the implicit function theorem, $V_f$ is represented near $p$ by the graph of a real analytic function
    $x_i = \phi(x_1, \ldots, \hat{x}_i, \ldots, x_n)$ (where $\hat{x}_i$ denotes omission of the variable $x_i$). We say that $g$ has
    order $k \in \N \cup \{\infty\}$ relative to $V_f$ at $p$ if the order of the real analytic function 
\begin{equation}
\label{restriction_of_g_to_V}
    g(x_1, \ldots, x_{i-1}, \phi(x_1, \ldots, \hat{x}_i, \ldots, x_n),
        x_{i+1}, \ldots, x_n)
\end{equation}
    at $(p_1, \ldots, \hat{p}_i, \ldots, p_n)$ is $k$.
    We shall denote the order of $g$ relative to $V_f$ at $p$
    by $\mathrm{ord}_p g|_{f=0}$.
    In this notation, the expression $g|_{f=0}$ is a standard designation
    for the {\em restriction} of the function $g$ to the submanifold $V_f$, near $p$.  This restriction is expressed conveniently by means of local coordinates in Equation (\ref{restriction_of_g_to_V}).
\end{definition}

\begin{remark}
    The above definition is independent of the coordinates used to represent the submanifold $V_f$ near $p$. In particular, if also $\partial f/\partial x_j (p) \neq 0$, for $j \neq i$, and we use the real analytic function
    $x_j = \psi(x_1, \ldots, \hat{x}_j, \ldots, x_n)$ to represent
    $V_f$ near $p$, then we obtain the same value for
    $\mathrm{ord}_p g|_{f=0}$. Relevant details can be found in Appendix 2 of \cite{Whitney1972} and Chapter 2 of \cite{McCallum1985}.
\end{remark}

\begin{remark}
    A notion similar to, and more general than, the concept
    of relative order defined above was presented informally
    in a talk by C. W. Brown at SYNASC-2023
    (in the Special Session in Honour of Professor James Davenport).
    Brown's notion, called restricted order, is a key component
    of on-going collaborative development of a new projection
    operator for CAD.
\end{remark}

The following straightforward result will be  useful.

\begin{proposition} \label{Simple fact}
    With the notation and assumptions of the above definition,
 $\mathrm{ord}_p g|_{f=0} > 0$ if and only if $g(p) = 0$.   
\end{proposition}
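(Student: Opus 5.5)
The plan is to unwind the definition of relative order and reduce the claim to the elementary fact that a real analytic function has positive order at a point exactly when it vanishes there.

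Fix the index $i$ with $\partial f/\partial x_i(p) \neq 0$, and let $\phi$ be the real analytic function supplied by the implicit function theorem. It is defined near $\hat{p} := (p_1, \ldots, \hat{p}_i, \ldots, p_n)$, and its graph represents $V_f$ near $p$. Write
\[
G(x_1, \ldots, \hat{x}_i, \ldots, x_n) := g(x_1, \ldots, x_{i-1}, \phi(x_1, \ldots, \hat{x}_i, \ldots, x_n), x_{i+1}, \ldots, x_n),
\]
which is the function in Equation (\ref{restriction_of_g_to_V}). As a composition of real analytic maps, $G$ is real analytic near $\hat{p}$, and by definition $\mathrm{ord}_p g|_{f=0} = \mathrm{ord}_{\hat{p}} G$.

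First, I will record that $\phi(\hat{p}) = p_i$. This holds because $p$ lies on $V_f$, and near $p$ the set $V_f$ coincides with the graph of $\phi$; indeed, the implicit function theorem gives $\phi$ precisely as the unique local solution of $f = 0$ for $x_i$ passing through $p$. It follows that the point at which $g$ is evaluated in $G(\hat{p})$ is exactly $p$, so $G(\hat{p}) = g(p)$.

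Second, I will invoke the standard fact that the order at a point $q$ of a real analytic function $h$ is the least total degree of a nonzero term in its Taylor expansion about $q$, with order $\infty$ when $h \equiv 0$ near $q$. Hence the order is $0$ exactly when the constant term $h(q)$ is nonzero, and the order is $> 0$ exactly when $h(q) = 0$; the case of order $\infty$ is included in the latter. Applying this to $h = G$ and $q = \hat{p}$ gives
\[
\mathrm{ord}_p g|_{f=0} > 0 \iff G(\hat{p}) = 0 \iff g(p) = 0,
\]
which is the claim.

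There is no real obstacle. The only point requiring care is the identification $\phi(\hat{p}) = p_i$. The independence of the choice of $i$ is not needed, since the remark following the definition already guarantees it, and the argument above works for any admissible $i$.
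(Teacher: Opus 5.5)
Your proof is correct and follows the same route as the paper. Both arguments rest on the observation $\phi(p_1,\ldots,\hat{p}_i,\ldots,p_n)=p_i$, so that the restricted function evaluated at $(p_1,\ldots,\hat{p}_i,\ldots,p_n)$ equals $g(p)$, combined with the fact that an analytic function has positive (possibly infinite) order at a point exactly when it vanishes there; your appeal to the Taylor-expansion description of order just spells out a step the paper leaves implicit.
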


\begin{proof} Observe that
$p_i = \phi(p_1, \ldots, \hat{p}_i, \ldots, p_n)$,
since $p \in V_f$ and $V_f$ is the graph of $\phi$, near $p$,
by the assumptions within the definition.
Suppose first that $g(p) = 0$.
Therefore
$g(p_1, \ldots, p_{i-1}, \phi(p_1, \ldots, \hat{p}_i, \ldots, p_n),
p_{i+1}, \ldots, p_n) = 0$ by the above observation.
Hence the order of the analytic function defined by Equation 
(\ref{restriction_of_g_to_V}) is positive or infinite,
from which $\mathrm{ord}_p g|_{f=0} > 0$ follows, by definition.

Conversely, suppose $\mathrm{ord}_p g|_{f=0} > 0$.
Then, the real analytic function defined by Equation (\ref{restriction_of_g_to_V}) vanishes at
$(p_1, \ldots, \hat{p}_i, \ldots, p_n)$, by definition.
But our initial observation says
$p_i = \phi(p_1, \ldots, \hat{p}_i, \ldots, p_n)$.
Therefore, $g(p) = 0$.
\end{proof}

We now proceed to state and prove a trilogy of 
theorems relating to delineability
which use the concept of relative order introduced above.
It is suggested that, for a first reading, the reader may like to
closely examine the statements of these theorems, and to leave the study
of the detailed proofs for another reading.
In reading the theorem statements, it may be helpful to keep in mind
the desired application to reduction of the second projection step. 
Our first significant result is a variation and consequence of the key lifting theorem (Theorem 2) from \cite{McCallum1998}.
Its wording is close to that of Theorem 7 of \cite{MNDS20},
keeping in mind that the key hypothesis that ``$d$ and $e$ are
intersection order invariant in $\sigma$'' in Theorem 7 of \cite{MNDS20}
is replaced by the assumption that ``the order of $d|_{e=0}$ is
invariant in $\sigma$'' in the theorem below.

\begin{theorem} \label{Variation of lifting theorem}
    Let $f(x_1, \ldots, x_n) \in \R[x_1, \ldots, x_n]$ be a real
    polynomial of positive degree in $x_n$, and let 
    $e(x_1, \ldots, x_{n-1}) \in \R[x_1, \ldots, x_{n-1}]$ be a real 
    polynomial of positive degree in $x_{n-1}$. 
    Let $d(x_1, \ldots, x_{n-1})$ and $a(x_1, \ldots, x_{n-1})$
    be the discriminant and the leading coefficient of $f$,
    respectively, both with respect to $x_n$, 
    suppose that $d$ is a non-zero polynomial, 
    and suppose that $e$ and $d$ are relatively prime.
    Let $\sigma$ be a connected submanifold of $\R^{n-1}$ contained in the
    real hypersurface (variety) defined by $e = 0$ in which 
    $a$ is sign-invariant and $f$ vanishes identically
    at no point. 
    Suppose that $\sigma$ contains no singular point of the real
    hypersurface defined by $e = 0$ and that 
    the order of $d|_{e = 0}$ is invariant
    in $\sigma$. 
    
    Then $f$ is analytic delineable on $\sigma$.
\end{theorem}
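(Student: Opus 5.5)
The plan is to reduce the statement to the key lifting theorem (Theorem 2 of \cite{McCallum1998}) by working intrinsically on the nonsingular part of the hypersurface $V_e$ defined by $e=0$, using local coordinates on it. Analytic delineability on a connected submanifold is a local property. If every point $p\in\sigma$ has a neighbourhood $U_p$ in $\sigma$ on which $f$ is analytic delineable, then connectedness of $\sigma$ lets the local root functions be patched together, because the number of distinct real roots and their multiplicities are locally constant. So it suffices to prove delineability near an arbitrary $p\in\sigma$.

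Fix $p\in\sigma$. Since $p$ is a non-singular point of $V_e$, some $\partial e/\partial x_i(p)\ne 0$ with $i\le n-1$. By the implicit function theorem, $V_e$ near $p$ is the graph $x_i=\phi(y)$ with $y=(x_1,\ldots,\hat{x}_i,\ldots,x_{n-1})$, which identifies a neighbourhood $W$ of $p$ in $V_e$ with an open set $W'\subset\R^{n-2}$. Under this chart, $\sigma\cap W$ corresponds to a connected submanifold $\sigma'$ of $W'$ (shrinking if necessary). Define
\[ F(y,x_n)=f(x_1,\ldots,\phi(y),\ldots,x_{n-1},x_n), \]
a polynomial in $x_n$ whose coefficients are real analytic in $y$. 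Discriminants and leading coefficients are polynomial in the coefficients and so commute with this substitution. Hence $\mathrm{discr}_{x_n}F=d|_{e=0}$ in these local coordinates, and $\mathrm{ldcf}_{x_n}F=a|_{e=0}$. By the definition of relative order and the coordinate-independence remark, the hypothesis that the order of $d|_{e=0}$ is invariant in $\sigma$ says exactly that $\mathrm{discr}_{x_n}F$ is order-invariant in $\sigma'$. Coprimality of $e$ and $d$ ensures that $d$ does not vanish identically on the component of $V_e$ through $p$. Therefore $\mathrm{discr}(F)$ is not the zero function near $p$ and the relative order is finite. Moreover, $F(y,\cdot)$ is not identically zero for $y\in\sigma'$, since $f$ vanishes identically at no point of $\sigma$.

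I would then invoke the proof of Theorem 2 of \cite{McCallum1998}, in the form valid for polynomials in $x_n$ with real analytic coefficients over a connected submanifold. That proof uses only local analytic facts (Puiseux/Weierstrass arguments near points of the submanifold) and so applies verbatim in this setting. It yields analytic delineability of $F$ on $\sigma'$. Transporting back through the graph map gives analytic delineability of $f$ on $\sigma\cap W$, since the root functions of $F$ composed with the chart are analytic on $\sigma\cap W$.

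I expect the main obstacle to be the degree-invariance hypothesis needed by that lifting theorem, since here we are only given that $a$ is sign-invariant. If $a\ne0$ on $\sigma$, degree-invariance is immediate. If $a\equiv 0$ on $\sigma$, I would pass to the reductum of $F$, repeating while leading coefficients vanish identically on $\sigma'$; the process terminates because $f$ vanishes identically at no point. The remaining work is to show that the first non-vanishing coefficient is nonzero throughout $\sigma'$, which gives degree-invariance, and that the discriminant of this reductum is order-invariant in $\sigma'$. For the latter I would try first the standard relation between $d$ and the discriminant of the reductum modulo the ideal generated by the vanishing leading coefficients, as in McCallum's treatment of the reduced case. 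If that fails, I would fall back on following McCallum's argument through the case of a leading coefficient that vanishes on $\sigma$ directly.
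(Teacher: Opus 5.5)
Your reduction is the paper's: restrict $f$ to $V_e$ through an implicit-function chart, note that discriminant and leading coefficient commute with the substitution, read the relative-order hypothesis as order-invariance of $\mathrm{discr}_{x_n}F$ in $\sigma'$, and apply the analytic-coefficient form of Theorem 2 of \cite{McCallum1998}. The paper cites this form as Theorem 7.1 of \cite{McCallum2001}. Your chart in the original coordinates is leaner than the paper's, and that is harmless. The paper first straightens $\sigma$, finds a normal direction $y_j$ with $\partial e/\partial y_j(0)\neq 0$ so that $\sigma$ becomes $\eta=0$, and uses Bezout to restrict $\dim\sigma$. None of this is needed once the lifting theorem is applied to an arbitrary connected submanifold of the chart.

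The genuine gap is degree-invariance, which you leave as ``remaining work''. You give no argument that the first surviving coefficient of $F$ is nonzero throughout $\sigma'$. The missing ingredient is Theorem 3.1 of \cite{Brown2001}, in its analytic-coefficient form; the paper applies exactly this before the lifting theorem. It says that a polynomial in $x_n$ which has sign-invariant leading coefficient, order-invariant discriminant, and no point of identical vanishing on a connected submanifold is degree-invariant there. You have already verified all three hypotheses for $F$ on $\sigma'$, so citing it closes your proof.

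Your detour through the reductum should be dropped. The lifting theorem needs degree-invariance plus order-invariance of $\mathrm{discr}(F)$ itself, not of $\mathrm{discr}(\mathrm{red}\,F)$. Moreover, an identity modulo the ideal of the vanishing leading coefficients determines $\mathrm{discr}(F)$ only up to multiples of those coefficients. Such multiples can change its order at points of $\sigma'$, so the identity transfers no order information.

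If you would rather prove the analytic Brown result than cite it, argue locally. Let $m=\deg_{x_n}f$, assume $\mathrm{ldcf}_{x_n}F\equiv 0$ on $\sigma'$, and put $k=\deg F(y_0,\cdot)$ for a point $y_0\in\sigma'$.
\begin{enumerate}
\item Pass to $\tilde F(y,z)=z^mF(y,1/z)$, which has the same discriminant.
\item Hensel-factor $\tilde F=q^*h^*$ near $y_0$, with $h^*$ monic of degree $m-k$, $h^*(y_0,z)=z^{m-k}$ and $q^*(y_0,0)\neq 0$.
\item Then $\mathrm{res}_z(q^*,h^*)$ is a unit. So $\mathrm{discr}(\tilde F)=\pm\mathrm{discr}(q^*)\,\mathrm{discr}(h^*)\,\mathrm{res}(q^*,h^*)^2$ together with Lemma A.3 of \cite{McCallum1988} makes $\mathrm{discr}_z(h^*)$ order-invariant in $\sigma'$, and it is not identically zero.
\item The monic $h^*$ is therefore delineable on $\sigma'$ near $y_0$. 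Hence $h^*(y,\cdot)$ has a single real root, of multiplicity $m-k$, for all such $y$.
\item That root is $0$, because $\tilde F(y,0)=\mathrm{ldcf}_{x_n}F(y)=0$ on $\sigma'$ while $q^*(y,0)\neq 0$.
\end{enumerate}
Thus $0$ is a root of $\tilde F(y,\cdot)$ of multiplicity exactly $m-k$, so $\deg F(y,\cdot)=k$ near $y_0$. Connectedness of $\sigma'$ then gives degree-invariance.
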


\begin{proof} 
In the case that $d$ vanishes nowhere in $\sigma$, the analytic 
delineability of $f$ on $\sigma$ follows immediately by
Theorem 3.1 of \cite{Brown2001} (which yields the degree-invariance of
$f$ on $\sigma$) and Theorem 2 of \cite{McCallum1998} (which delivers
the analytic delineability of $f$, using its degree-invariance 
and non-identically-vanishing on $\sigma$, 
and the order-invariance of $d$ in $\sigma$). 
So, henceforth assume that $d$ vanishes at some 
point of $\sigma$. Then, since the order of $d|_{e=0}$ is invariant in 
$\sigma$,
by assumption, $d$ must vanish everywhere in $\sigma$, 
by Proposition \ref{Simple fact}.
Since $\sigma$ is connected, it suffices to show that $f$ is analytic
delineable on $\sigma$ near an arbitrary point $p$ of $\sigma$.

Let $s$ be the dimension of $\sigma$. The case $s = n-1$, 
in which $\sigma$ is
just an open subset of $\R^{n-1}$, cannot arise by our assumption that
$d$ is non-zero, and our previous conclusion that $d$
vanishes everywhere in $\sigma$.
By our assumption that $e$ and $d$ are relatively prime, 
our previous conclusion that $d$ vanishes everywhere 
in $\sigma$, and the 
general form of Bezout's theorem (\cite{Abhyankar1990}, Lecture 30),
we may deduce that the order of $d|_{e=0}$ at $p$ is finite and 
positive, and that $s \neq n-2$. By excluding the trivial case $s = 0$, 
we therefore have $1 \le s \le n-3$.

Choose a coordinate system $\Phi~:~U \rightarrow U'$ about $p$,
with $U$ and $U'$ suitable neighbourhoods of $p$ and the origin,
respectively, with $\Phi$ informally denoted by $(y_1, \ldots, y_{n-1})$, 
such that
$\sigma$ is defined near $p$ by the equations 
$y_{s+1} = 0, \ldots, y_{n-1} = 0$ in the new coordinate system
(by Theorem 2.2 of \cite{McCallum1988}). We remark that the choice of 
such a coordinate system entails $\Phi(p) = 0$.

We denote by 
$f'(y_1, \ldots, y_{n-1}, x_n)$, $d'(y_1, \ldots, y_{n-1})$, 
$a'(y_1, \ldots, y_{n-1})$ and \\$e'(y_1, \ldots, y_{n-1})$
the polynomials $f$, $d$, $a$ and $e$ expressed in the new coordinates.
(We stress that the ``dash'' superscript does not denote derivative here.)
Then $d'$, $a'$ and $e'$ are real analytic functions defined near $0$
(to be precise, defined in $U'$),
and $f'$ is a polynomial in $x_n$ whose coefficients are real
analytic functions defined near $0$ (to be precise, defined in $U'$).
Let $T$ denote the $s$-dimensional linear
subspace of $\R^{n-1}$ defined by $y_{s+1} = 0, \ldots, y_{n-1} = 0$.
By the assumptions, $T$ is contained in the real hypersurface defined by 
$e' = 0$, near $0$, 
$a'$ is sign-invariant in $T$, near $0$,
and $f'$ vanishes identically at no point
of $T$, near $0$. Moreover, $0$ is not a singular point of
the real hypersurface $e' = 0$ and, by the proof of
Theorem 2.2.2 of \cite{McCallum1985} and a previous conclusion, 
the order of $d'|_{e'=0}$ is invariant, finite and positive in $T$, 
near $0$. 

It will be helpful to introduce simplified notation for the variables.
We shall put $x = (y_1, \ldots, y_s)$ and 
$y = (y_{s+1}, \ldots, y_{n-1})$.
Our next task is to show that we may deduce 
$\partial e'/\partial y_j (0) \neq 0$, for some $j$, 
with $s+1 \le j \le n-1$.
Suppose, for the sake of contradiction, that
$\partial e'/\partial y_j (0) = 0$, for every $j$, with 
$s+1 \le j \le n-1$. Then, since $0$ is a non-singular point of
$e' = 0$, we must have $\partial e'/\partial y_i (0) \neq 0$, for some
$i$, with $1 \le i \le s$. By continuity of $\partial e'/\partial y_i$,
we may refine the neighbourhood $U'$ of $0$ as needed to ensure that
$\partial e'/\partial y_i (q) \neq 0$, for all $q \in U'$.
Put 
\[T' = \{(x,y) \in U'~|~e'(x,y) = 0 \wedge 
                       y_{s+1} = 0 \wedge \cdots \wedge y_{n-1} = 0\}.\]
We claim that $T'$ is an $(s-1)$-dimensional submanifold of $\R^{n-1}$.
We prove our claim as follows. Take an arbitrary point $q \in T'$.
Since the $i$th component of $de'(q)$ is non-zero
(by the property of $U'$ ensured above), and
$dy_j(q)$ is the unit vector in the positive direction of the 
$y_j$-axis, for $s+1 \le j \le n-1$, it follows that the vectors
$de'(q), dy_{s+1}(q), \ldots, dy_{n-1}(q)$ are linearly independent.
Therefore, the set $T'$ defined above is an $(s-1)$-dimensional
submanifold of $\R^{n-1}$, by definition of submanifold
\cite{McCallum1988, Whitney1972}. Our claim is proved.
Yet our $s$-dimensional linear
subspace $T$ clearly satisfies $T \cap U' \subseteq T'$,
by definition of $T$ and the containment of $T \cap U'$ within 
the hypersurface $e' = 0$ noted above.
In other words, we have one submanifold of $\R^{n-1}$, namely
$T \cap U'$, contained in another, namely $T'$; yet the dimension of
the former submanifold is greater than that of the latter.
This contradicts a basic dimension property of submanifolds.
We may conclude that indeed
$\partial e'/\partial y_j (0) \neq 0$, for some $j$, 
with $s+1 \le j \le n-1$.

From here on, with slight abuse of notation, but hopefully no confusion,
we drop the ``dash'' decoration on $f$, $e$ and $d$.
We denote the $(n-s-2)$-tuple of variables
$(y_{s+1}, \ldots, \hat{y}_j, \ldots, y_{n-1})$ by $\eta$.
(Here $\hat{y}_j$ denotes omission of $y_j$.)
Recall that $e(x,y)$ is a real analytic function defined in $U'$,
$e(0,0) = 0$, $\partial e/\partial y_j (0) \neq 0$,
and the order of $d|_{e=0}$ is invariant, positive and finite
in $T \cap U'$. By the implicit function theorem,
there is an open box $B = \prod_{i=1}^{n-1} I_i$ 
centred at the origin in $\R^{n-1}$, and a function 
$\phi~:~(B^* = \prod_{i \neq j} I_i) \rightarrow I_j$ such that
$\phi$ is analytic, $B \subseteq U'$, and
for all $(x,y) \in B$,
we have $e(x,y) = 0$ if and only if $y_j = \phi(x, \eta)$.

For $(x, \eta) \in B^*$, put 
\[f^*(x, \eta, x_n) = f(x, y_{s+1}, \ldots, y_{j-1}, \phi(x, \eta),
                          y_{j+1}, \ldots, y_{n-1}, x_n).\]
Then $f^*(x, \eta, x_n)$ is a polynomial in $x_n$ whose coefficients
are real analytic functions defined in $B^*$.
Let us denote by $d^*(x, \eta)$ and $a^*(x, \eta)$ 
the discriminant and leading coefficient of $f^*(x, \eta, x_n)$,
respectively, both with respect to $x_n$. It is apparent that 
\[d^*(x, \eta) = d(x, y_{s+1}, \ldots, y_{j-1}, \phi(x, \eta),
                      y_{j+1}, \ldots, y_{n-1}),\]
for all $(x, \eta) \in B^*$, and a similar property holds for $a^*$. 
Let us set $T^*$ to be the $s$-dimensional
linear subspace of $\R^{n-2}$ (the space of the coordinates $(x, \eta)$)
defined by $\eta = 0$. It follows from the equation for $d^*$
immediately above,
together with the invariance and finiteness of the order of $d|_{e=0}$ in 
$T \cap U'$, that $d^*(x, \eta)$ is order-invariant, of finite order,
in $T^* \cap B^*$. Moreover, $a^*$ is sign-invariant in $T^* \cap B^*$,
and $f^*$ vanishes identically at no point of $T^* \cap B^*$.
Therefore, by Theorem 7.1 of \cite{McCallum2001},
which is a slight generalization of Theorem 2 of \cite{McCallum1998},
and an analogous generalization of Theorem 3.1 of \cite{Brown2001},
with the latter applied first to yield degree-invariance of $f^*$,
$f^*$ is analytic delineable on $T^* \cap B^*$.
It follows that $f$ is analytic delineable on $\sigma$ near $p$.
\end{proof}

Close inspection of the statement of Theorem \ref{Variation of lifting 
theorem} reveals that, to be useful for iterated equational 
projection, we need at least one more theorem which would yield as a 
conclusion the property related to a key hypothesis of
Theorem \ref{Variation of lifting theorem} about the invariance of the 
order of $d|_{e=0}$ in the submanifold $\sigma$.
In other words, we would like and aim to have a slight strengthening of
Theorem \ref{Main lemma for reduced equational projection},
which would yield as conclusion, in the notation of that theorem, 
invariance of
the order of $g$ relative to the real hypersurface $f = 0$ in each section
of $f$ over $S$ which contains no singular point of $f=0$. However,
despite some effort invested to find such a desired new theorem,
we have succeeded in proving only a rather special fact
in this direction (see Remark \ref{Special hypothesis} below).
In the statement of the next theorem, we have used notation similar to
that of Theorem \ref{Main lemma for reduced equational projection}.
For a second reading (of the following theorem statement), the reader is 
advised mentally to make the substitutions $n-1$ for $n$, $e$ for $f$, 
$d$ for $g$ and $\rho$ for $r$; this may help the reader to see the
intended application of the following theorem in conjunction with
Theorem \ref{Variation of lifting theorem}.

\begin{theorem} \label{Variation of main lemma from M 1999}
    Let $f$ and $g$ be real polynomials in $x_1, \ldots, x_n$,
    of positive degrees in $x_n$, let $r \in \R[x_1, \ldots, x_{n-1}]$
    be the resultant of $f$ and $g$ with respect to $x_n$,
    and suppose that $r$ is a non-zero polynomial.
    Let $S$ be a connected submanifold of $\R^{n-1}$ on which
    $f$ is analytic delineable and in which $r$ is order-invariant.
    Suppose moreover that the codimension of $S$ in $\R^{n-1}$
    is not more than 1.
    
    Then, for every section $\sigma$ of $f$ over $S$ which contains
    no singular point of the real hypersurface $f = 0$,
    the analytic function $g|_{f=0}$ is order-invariant in $\sigma$.
\end{theorem}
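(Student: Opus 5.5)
Let $\sigma$ be a section of $f$ over $S$ containing no singular point of $f=0$. The plan is to split on the dimension of $S$. By hypothesis $S$ is either open in $\R^{n-1}$ (codimension 0) or a hypersurface piece (codimension 1). Since $\sigma$ is connected, it suffices to show that $\mathrm{ord}_p g|_{f=0}$ is locally constant along $\sigma$. The proof thus reduces to a local statement near an arbitrary point $p=(p',\theta(p'))\in\sigma$, where $\theta: S\to\R$ is the analytic section function provided by analytic delineability.

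First I would dispose of the easy dichotomy using Theorem \ref{Main lemma for reduced equational projection}. That theorem applies because $S$ is connected, $f$ is delineable on $S$, and $r$ is order-invariant in $S$. It gives that $g$ is sign-invariant in $\sigma$. If $g\neq 0$ throughout $\sigma$, then Proposition \ref{Simple fact} gives order $0$ everywhere, and we are done. So assume $g$ vanishes identically on $\sigma$, whence $r$ vanishes on $S$, with constant finite order $m>0$ there (finite because $r\neq 0$). The goal is to relate $\mathrm{ord}_p g|_{f=0}$ to $m$ and to the orders of the other roots.

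The key local computation uses the nonsingularity of $\sigma$. At $p$ some partial derivative of $f$ is nonzero. Near $p$ the hypersurface $f=0$ is then the graph $x_n=\tilde\theta(x')$ of an analytic function on a neighbourhood $N$ of $p'$ in $\R^{n-1}$ extending $\theta$. If $\partial f/\partial x_n(p)\neq0$, this follows from the implicit function theorem. Otherwise I would use analytic delineability together with the codimension bound: in codimension 0 the section already covers a neighbourhood, and in codimension 1 a non-vertical tangent is forced. This last point is where I expect trouble, and a vertical tangent might need a separate argument, possibly via the Remark on coordinate independence. Then $g|_{f=0}$ is represented by $G(x')=g(x',\tilde\theta(x'))$. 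Over $N$, I would factor $f=(x_n-\tilde\theta)\,h$, with $h$ a polynomial in $x_n$ whose coefficients are analytic functions (Weierstrass division, or simply polynomial division since $\tilde\theta$ is a root). The resultant then factors multiplicatively as
\[ r(x') = \pm\, a(x')^{\deg g}\, G(x')\cdot \mathrm{res}_{x_n}(h,g)(x'), \]
up to a unit factor of a power of the leading coefficient $a$, which is nonzero near $p'$ since $f$ is degree-invariant on $S$ by delineability.

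Now split on the codimension. In codimension 0, $\sigma$ is open in $f=0$, and $G$ is defined on the open set $S$. Since $r$ is order-invariant and vanishes on the open set $S$, it vanishes identically, contradicting $r\neq0$. So this case reduces entirely to the sign-invariant case above. In codimension 1, I would use coordinates $(z,w)$ near $p'$ with $S=\{w=0\}$. Along $S$ the functions $r$, $G$ and $R_h=\mathrm{res}(h,g)$ are analytic, and the orders add: $\mathrm{ord}_q r=\mathrm{ord}_q G+\mathrm{ord}_q R_h$. Order in a single transverse direction is upper semicontinuous, and total order equals transverse order generically. The main obstacle is to show that both summands are individually constant along $S$ when their sum is. I would attempt this by showing each summand is upper semicontinuous along $S$, so that constancy of the sum forces constancy of each on a dense open set, and then using the connectedness of $S$ together with an inductive application of the same argument to $h$, whose delineable roots over $S$ are the remaining sections. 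The step I most expect to fail is exactly this semicontinuity splitting at points where roots of $h$ meet $\sigma$. I would first try to rule such meetings out using analytic delineability, which keeps the distinct sections disjoint.
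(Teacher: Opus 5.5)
Your preliminary reductions are sound: Theorem 1 gives sign-invariance of $g$ on $\sigma$, and codimension $0$ is excluded because $r\neq 0$. When $\theta$ is a \emph{simple} root of $f$, your factorization $f=(x_n-\tilde\theta)h$, $r=\pm G\cdot\mathrm{res}_{x_n}(h,g)$ is exactly the paper's short case.

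The gap is the other case, which you dismiss by asserting that in codimension $1$ ``a non-vertical tangent is forced''. That assertion is false. What is forced is only that $\sigma$, being the graph of $\theta$, is non-vertical; the tangent hyperplane of $f=0$ along $\sigma$ may contain the $x_n$-direction. Straighten $S$ to $\{y_{n-1}=0\}$ and differentiate $f(y,0,\theta(y))\equiv 0$. This shows only that $\partial f/\partial x_n(p)\neq 0$ or $\partial f/\partial y_{n-1}(p)\neq 0$. Moreover $\partial f/\partial x_n$ vanishes along $\sigma$ exactly when $\theta$ is a root of (constant) multiplicity $m\ge 2$, and then only $\partial f/\partial y_{n-1}\neq 0$.

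For instance, in $\R^3$ take $f=x_3^3-x_2$ and $S=\{x_2=0\}$. Then $f$ is analytic delineable on $S$, with the single section $x_3=0$ of multiplicity $3$, and $f=0$ is everywhere nonsingular. Yet near the origin $f=0$ is the graph $x_3=x_2^{1/3}$, so no analytic $\tilde\theta$ exists. In this case $g|_{f=0}$ must be computed in the chart $y_{n-1}=\psi(y,x_n)$, as $G(y,x_n)=g(y,\psi(y,x_n),x_n)$. This function lives on a different coordinate space from $r(y,y_{n-1})$ and is not a factor of it, so your additivity of orders has nothing to act on.

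What is missing is a way to transfer order information from the elimination of $x_n$ (the resultant) to the parametrization by $(y,x_n)$, which eliminates $y_{n-1}$. The paper does this in four steps.
\begin{enumerate}
\item Hensel's lemma gives $f=q^*h^*$ near $p$, with $h^*$ of degree $m$ carrying exactly the roots near $\theta$ and $q^*$ nonvanishing.
\item Applying Lemma A.3 of \cite{McCallum1988} to $r=\mathrm{res}_{x_n}(q^*,g)\,v$ shows that $v=\mathrm{res}_{x_n}(h^*,g)$ is order-invariant along $S$. Hence $v=y_{n-1}^{\mu}u$ with $u$ a unit, by Lemma 4.4 of \cite{MPP2019}.
\item For each fixed $y=b$, $\mu$ is the intersection multiplicity at the origin of the plane curves $f(b,\cdot,\cdot)=0$ and $g(b,\cdot,\cdot)=0$. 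Its invariance under interchanging $y_{n-1}$ and $x_n$ (after Weierstrass preparation in $y_{n-1}$) gives $\mathrm{ord}_{x_n}g(b,\psi(b,x_n),x_n)=\mu$ for every $b$.
\item Order-invariance of $g|_{f=0}$ then follows by Lemma 4.4 again.
\end{enumerate}
This fibrewise plane-curve argument is where the codimension-at-most-$1$ hypothesis is really used.

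Separately, the ``main obstacle'' you foresee in the simple-root case is not one. The order of an analytic function is upper semicontinuous, so if $\mathrm{ord}\,G+\mathrm{ord}\,R_h$ is constant and finite along $S$, each summand is locally constant at every point (again Lemma A.3 of \cite{McCallum1988}). No dense-open-set argument, induction on $h$, or disjointness of sections is needed. Also, the factor $a^{\deg g}$ is spurious, since $\mathrm{res}_{x_n}(x_n-\tilde\theta,g)=\pm G$.
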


\begin{remark} \label{Special hypothesis}
    The special hypothesis, which we conjecture could be removed, is 
    that 
    concerning the codimension of $S$. Since we do not (yet) see how to 
    prove the theorem without this special hypothesis in force, we
    present the following proof which makes crucial use of this 
    hypothesis, amongst the other ones.
\end{remark}

\begin{proof}
    Let $\sigma$ be a section of $f$ over $S$ determined by the
    analytic function $\theta~:~S \rightarrow \R$.
    By connectedness of $\sigma$, it suffices to show that
    $g_{f=0}$ is order-invariant in $\sigma$ near an arbitrary point 
    $(p, \theta(p))$ of $\sigma$. Let $(p, \theta(p))$ be such a point.
    That $g_{f=0}$ is order-invariant in $\sigma$, near $(p, \theta(p))$
    follows easily by continuity of $g$ in the case that $g(p, \theta(p)) \neq 0$.
    So, henceforth we will assume that $g(p, \theta(p)) = 0$.
    Therefore, $r(p) = 0$, and hence $r$ vanishes everywhere in
    $S$, by the order-invariance assumption.
    Let $s$ be the dimension of $S$. The case $s = n-1$ (that is, the case in which the codimension of $S$ in $\R^{n-1}$ is 0) is one in which $S$ is an open subset of $\R^{n-1}$.
    This case cannot arise by our assumption that $r$ is a non-zero
    polynomial and our previous conclusion that $r$ vanishes everywhere
    in $S$. Therefore, by our codimension assumption,
    we have $s = n-2$ (that is, $S$ has codimension 1 in its space).

    By Theorem 2.2 of \cite{McCallum1988}, we choose coordinates
    designated informally by $(y,{y_n-1}) =
    (y_1, \ldots, y_{n-2}, y_{n-1})$ about $p$ such that $S$ is defined
    locally by the equation $y_{n-1} = 0$ in the new coordinate system,
    which maps $p$ to $0$, by definition.
    With slight abuse of notation, we denote by $f'(y, y_{n-1}, x_n)$,
    $g'(y, y_{n-1}, x_n)$ and $r'(y,y_{n-1})$ the polynomials
    $f$, $g$ and $r$ expressed in $(y,y_{n-1})$-coordinates.
    Then $r'$ is a real analytic function defined near $0$,
    and $f'$ and $g'$ are so-called pseudopolynomials in $x_n$ near $0$, 
    that is, polynomials in $x_n$ whose coefficients are real analytic.
    Moreover, by Theorem 2.1 of \cite{McCallum1988}, $r'$ is
    order-invariant in $T$ near $0$, where $T$ is the hyperplane in
    $\R^{n-1}$ defined by $y_{n-1} = 0$. For simplicity,
    we hereafter drop the ``dash'' decoration on $f$, $g$ and $r$,
    trusting that no confusion will arise.
    The analytic function $\theta$, for its part, is now similarly considered to be defined in $T$ (identified with $\R^{n-2}$), 
    near $0$, with graph 
    denoted by $\sigma$; and we may assume that
    $\theta(0) = 0$, without loss of generality.

    We claim that, in consequence of the assumption that $\sigma$
    contains no singular point of the real hypersurface $f=0$, near
    the origin, we may conclude that 
    $\partial f/\partial y_{n-1} (0) \neq 0$
    or $\partial f/\partial x_n (0)\neq 0$. 
    Suppose for the sake of contradiction that this is not the case.
    Then, by the non-singularity assumption, we must have
    $\partial f/\partial y_i (0) \neq 0$, for some $i$,
    with $1 \le i \le n-2$. Let $\sigma_i$ denote the intersection
    of the real hypersurface $f = 0$ with all the hyperplanes
    $y_1 = 0, \ldots, y_{i-1} = 0, y_{i+1} = 0, \ldots, y_{n-1} = 0$,
    in $\R^n$, near the origin. Then $\sigma_i \subseteq \sigma$ is an
    analytic 1-submanifold of $\R^n$, near the origin, which lies
    in the plane of the $y_i$ and $x_n$ coordinates.
    The tangent space (line) of $\sigma_i$ at $0$ is 
    the span of the vector
    $(0, \ldots, 0, 1)$, by Lemma 5C, Appendix II, of
    \cite{Whitney1972}. This contradicts the analyticity at $0$ of the
    function $\theta$ whose graph is $\sigma$. The claim is proved.

    We shall focus attention on the zero set of $f$ near the origin
    using an extension of Hensel's lemma (\cite{Abhyankar1990}, exercises
    on pages 92 and 95, with relaxation of monicity of $f$; or 
    \cite{McCallum1999}, Theorem 3.1, for
    the 3-variable case). Denote by $d$ the degree of $f$ with respect to
    $x_n$, and by $m$ the multiplicity of the root $0$ of $f(0,0,x_n)$.
    (Recall that, in the argument list for $f(0,0,x_n)$, the first $0$
    represents the first $n-2$ arguments, while the second represents
    $y_{n-1} = 0$.) By the extended Hensel's lemma just cited,
    there exists an open box $B \times (-\delta, \delta)$ 
    centred at the origin in $\R^{n-1}$
    and pseudopolynomials in $x_n$, $q^*(y, y_{n-1}, x_n)$ 
    of degree $d-m$ 
    and $h^*(y, y_{n-1}, x_n)$ of degree $m$, whose coefficients are 
    analytic in $B \times (-\delta, \delta)$, 
    such that $f = q^*h^*$, $h^*(0,0,x_n) = x_n^m$,
    and $q^*(0,0,0) \neq 0$. Since $q^*(0,0,0) \neq 0$ and $q^*$ is
    analytic, hence continuous, in $B \times (-\delta, \delta)$, 
    we may refine the box $B \times (-\delta, \delta)$
    and choose $\epsilon > 0$ such that $q^* \neq 0$ everywhere in
    $B \times (-\delta, \delta) \times (-\epsilon, \epsilon)$. 
    By continuity of $\theta$,
    after further refining $B$ if necessary, we may moreover assume that 
    for all $y \in B$, $\theta(y) \in (-\epsilon, \epsilon)$.
    It follows by delineability of $f$ and the above properties of $q^*$, 
    $h^*$ and $\theta$ that, for all $y \in B$,
    $\theta(y)$ is a root of $h^*(y, 0, x_n)$ of multiplicity $m$, 
    hence the {\em unique} root of $h^*(y, 0, x_n)$.
    By Theorem 3 of \cite{Loos1982} we have
    \[r = \mathrm{res}_{x_n}(q^*,g) \mathrm{res}_{x_n}(h^*,g).\]
    By hypothesis about the order-invariance of $r$ in $S$ 
    (hence in $T \cap (B \times (-\delta, \delta))$, in the new 
    coordinates), and Lemma A.3 of
    \cite{McCallum1988}, $v(y, y_{n-1}) = \mathrm{res}_{x_n}(h^*,g)$ 
    is order-invariant in $T \cap (B \times (-\delta, \delta))$. 
    Let us put 
    $\mu = \mathrm {ord}_{(0,0)} v$.

    To arrive finally at our desired conclusion 
    concerning the order of $g$
    relative to the real hypersurface $f = 0$ along $\sigma$
    we need to consider separately the two cases,
    at least one of which must occur as proved above, namely
    $\partial f/\partial x_n(0) \neq 0$ and 
    $\partial f/\partial y_{n-1} (0) \neq 0$.
    For the former case the remainder of the proof is relatively short.
    In this case, we have $m = 1$ and 
    $h^*(y, y_{n-1}, x_n) = x_n - \phi(y, y_{n-1})$,
    for some function $\phi$ analytic in $B \times (-\delta, \delta)$.
    Therefore, by Theorem 1 of \cite{Loos1982}, 
    $v(y, y_{n-1}) = g(y, y_{n-1}, \phi(y, y_{n-1}))$,
    for all $(y, y_{n-1}) \in B \times (-\delta, \delta)$.
    Hence, by definition, the order of $g|_{f=0}$ is invariant in
    $\sigma$, near the origin.

    On the other hand,  suppose that 
    $\partial f/\partial y_{n-1}(0)\neq 0$.
    We aim to prove that the order of $g|_{f=0}$ is invariant along 
    $\sigma$ near $0$, using this assumption.
    To prepare for this, it will be convenient to
    assume that the function $\theta$ is identically zero,
    which we may do without loss of generality by applying a suitable
    analytic change of coordinates ($y' = y$, $y_{n-1}' = y_{n-1}$,
    $x_n' = x_n - \theta(y)$).
    By the order-invariance of $v$ in 
    $T \cap (B \times (-\delta, \delta))$, and
    Lemma 4.4 of \cite{MPP2019} (using (3) implies (2)), 
    after refining $B \times (-\delta, \delta)$ about $0$
    if necessary, we may find 
    an analytic function $u$ in $B \times (-\delta, \delta)$, 
    with $u(y, y_{n-1}) \neq 0$
    for all $(y, y_{n-1}) \in B \times (-\delta, \delta)$, such that
    $v(y, y_{n-1}) = y_{n-1}^\mu u(y, y_{n-1})$,
    for all $(y, y_{n-1}) \in B \times (-\delta, \delta)$ 
    (where $\mu$ is the invariant order of $v$ in 
    $T \cap (B \times (-\delta, \delta))$, previously defined).
    Let $b$ be an arbitrary element of $B$.
    Then, by (a slight extension of) 
    the definition of the intersection multiplicity at the origin
    of two plane analytic curves in \cite{Abhyankar1990} (page 126), we may deduce that the intersection multiplicity at $(0,0)$ 
    of the plane analytic curves 
    $f(b, y_{n-1}, x_n) = 0$ and $g(b, y_{n-1}, x_n) = 0$ is $\mu$.
    We wish to use the invariance of the intersection multiplicity 
    at $(0,0)$ of these two plane analytic curves under 
    interchange of the
    coordinates, as discussed in \cite{Abhyankar1990}, Lecture 17.
    To this end, noting that neither plane analytic curve is necessarily
    polynomial in $y_{n-1}$, we apply the Weierstrass preparation
    theorem \cite{Abhyankar1990}, Lecture 16, to write
    \[f(y, y_{n-1}, x_n) = q'(y, y_{n-1}, x_n) h'(y, y_{n-1}, x_n),\]
    in $B \times (-\delta, \delta) \times (-\epsilon, \epsilon)$, and
    \begin{equation}
    \label{Weierstrass for g}
    g(b, y_{n-1}, x_n) = t'(y_{n-1}, x_n) x_n^\gamma k'(y_{n-1},x_n),
    \end{equation}
    in $(-\delta, \delta) \times (-\epsilon, \epsilon)$,
    where $q'$, $t'$ are everywhere non-vanishing 
    in their respective domains,
    $h'$, $k'$ are distinguished polynomials 
    in $y_{n-1}$, and $x_n^\gamma$ is the highest power of $x_n$ which
    divides $g(b, y_{n-1}, x_n)$.
    (For the application of Weierstrass to $f(y, y_{n-1}, x_n)$
    we need the multivariate version of the theorem described in 
    \cite{Abhyankar1990}, Lecture 16, Remark 5, page 120.)
    In fact, by our assumption that 
    $\partial f/\partial y_{n-1}(0)\neq 0$,
    we have $h'(y, y_{n-1}, x_n) = y_{n-1} - \psi(y, x_n)$,
    for some analytic function $\psi$ defined in 
    $B \times (-\epsilon, \epsilon)$.
    Therefore, by the invariance of the intersection multiplicity
    at $(0,0)$ of the two plane analytic curves 
    $f(b, y_{n-1}, x_n) = 0$ and $g(b, y_{n-1}, x_n) = 0$
    under interchange of the coordinates 
    \cite{Abhyankar1990} (Lecture 17, page 127), we have
    \[\mu = \mathrm{ord}_{x_n} \mathrm{res}_{y_{n-1}}
            (h'(b, y_{n-1}, x_n), x_n^\gamma k'(y_{n-1}, x_n)).\]
    By Equation (\ref{Weierstrass for g}), we have
    \[g(b, \psi(b, x_n), x_n) = 
      t'(\psi(b, x_n), x_n) x_n^\gamma k'(\psi(b, x_n), x_n),\]
    and by Theorem 1 of \cite{Loos1982},
    \[ \mathrm{res}_{y_{n-1}} (h'(b, y_{n-1}, x_n), 
                               x_n^\gamma k'(y_{n-1}, x_n)) = 
                               x_n^\gamma k'(\psi(b, x_n), x_n).\]
    Therefore, since $t' \neq 0$, 
    $\mathrm{ord}_{x_n} g(b, \psi(b, x_n), x_n) = \mu$.
    Application of Lemma 4.4 of \cite{MPP2019}(using (1) implies (3))
    yields the desired invariance of the order of
    $g(y, \psi(y, x_n), x_n)$ in $B \times (-\epsilon, \epsilon)$,
    from which the invariance of the order of $g|_{f=0}$ along $\sigma$,
    near the origin, follows by definition.
\end{proof}

We complete our trilogy of theorems relating to delineability with another
variation of Theorem \ref{Main lemma for reduced equational projection}.
The following variant is, like 
Theorem \ref{Variation of main lemma from M 1999}, also a slight 
strengthening of Theorem \ref{Main lemma for reduced equational 
projection}, achieved here
by weakening the hypothesis a little.

\begin{theorem}
    Let $f$ and $g$ be elements of $\R[x_1, \ldots, x_n]$,
    of positive degrees in $x_n$, let $r$ be the resultant of $f$ and $g$ 
    with respect to $x_n$, and suppose $r \neq 0$.
    Let $e$ be an element of $\R[x_1, \ldots, x_{n-1}]$,
    of positive degree in $x_{n-1}$, and suppose that $e$ and $r$
    are relatively prime. Let $\sigma \subseteq \R^{n-1}$ be a connected
    submanifold contained in the real hypersurface $e = 0$
    on which $f$ is analytic delineable.
    Suppose that $\sigma$ contains no singular point of $e = 0$
    and that the order of $r|_{e=0}$ is invariant in $\sigma$.
    
    Then $g$ is sign-invariant in each section of $f$ over $\sigma$.
\end{theorem}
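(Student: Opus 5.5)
The plan follows the proof of Theorem~\ref{Variation of lifting theorem}: use the implicit function theorem to absorb the hypersurface $e=0$ into the coordinates, then apply an analytic version of Theorem~\ref{Main lemma for reduced equational projection} to the transformed data.

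\textbf{Easy case.} First suppose $r$ vanishes nowhere in $\sigma$. Let $\tau$ be a section of $f$ over $\sigma$ and $(q,\theta(q))\in\tau$. Since $f(q,\theta(q))=0$ and $f(q,x_n)$ is not identically zero (by delineability), $g(q,\theta(q))=0$ would force $r(q)=0$. So $g$ has no zero on $\tau$. As $\tau$ is connected and $g$ is continuous, $g$ is sign-invariant on $\tau$.

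\textbf{Reduction when $r$ vanishes somewhere.} Otherwise $r$ vanishes at some point of $\sigma$. By the invariance of $\mathrm{ord}\, r|_{e=0}$ in $\sigma$ and Proposition~\ref{Simple fact}, $r$ then vanishes everywhere in $\sigma$. Since $e$ and $r$ are relatively prime, the argument via the general Bezout theorem in the proof of Theorem~\ref{Variation of lifting theorem} gives:
\begin{itemize}
\item the invariant relative order is finite and positive;
\item the dimension $s$ of $\sigma$ satisfies $s\le n-3$, after discarding the trivial case $s=0$, where the sections are single points.
\end{itemize}
By connectedness of each section, it suffices to prove sign-invariance of $g$ in each section near an arbitrary point over $p\in\sigma$.

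\textbf{Local coordinates.} Choose coordinates $(x,y)=(y_1,\dots,y_s,y_{s+1},\dots,y_{n-1})$ about $p$ in which $\sigma$ is $y=0$. The dimension argument already used in the proof of Theorem~\ref{Variation of lifting theorem} shows that $\partial e/\partial y_j(0)\ne0$ for some $j>s$. By the implicit function theorem, on a box $B$ the set $e=0$ is the graph $y_j=\phi(x,\eta)$, where $\eta$ denotes the remaining $y$-variables. Define $f^*(x,\eta,x_n)$ and $g^*(x,\eta,x_n)$ by substituting $y_j=\phi(x,\eta)$; these are pseudopolynomials in $x_n$ with analytic coefficients on $B^*$. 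Because resultants commute with this substitution, $r^*(x,\eta)=\mathrm{res}_{x_n}(f^*,g^*)$ is exactly $r$ restricted to $e=0$ in these local coordinates.

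Consequently, on $T^*=\{\eta=0\}\cap B^*$:
\begin{itemize}
\item $r^*$ is order-invariant of finite order, by the hypothesis on $\mathrm{ord}\, r|_{e=0}$ and the coordinate independence noted in the Remark after the relative-order definition; in particular $r^*\not\equiv0$;
\item $f^*$ is analytic delineable on $T^*$, because $T^*$ corresponds to $\sigma$ under the chart and the sections of $f^*$ over $T^*$ are exactly the sections of $f$ over $\sigma$.
\end{itemize}

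\textbf{Conclusion and main obstacle.} Now apply Theorem~\ref{Main lemma for reduced equational projection} in a version for pseudopolynomials with analytic coefficients on an open set of $\R^{n-2}$. This gives that $g^*$ is sign-invariant in each section of $f^*$ over $T^*$, hence $g$ is sign-invariant in each section of $f$ over $\sigma$ near $p$. I expect this analytic generalization to be the main obstacle; it plays the role that Theorem~7.1 of \cite{McCallum2001} plays for Theorem~2 of \cite{McCallum1998}. I would re-run the proof of Theorem~2.2 of \cite{McCallum1999} in the pseudopolynomial setting:
\begin{itemize}
\item factor $f^*=q^*h^*$ near a root by the extended Hensel lemma, as in the proof of Theorem~\ref{Variation of main lemma from M 1999};
\item write $r^*=\mathrm{res}(q^*,g^*)\,\mathrm{res}(h^*,g^*)$ using Loos;
\item deduce that $\mathrm{res}(h^*,g^*)$ is order-invariant, by Lemma~A.3 of \cite{McCallum1988};
\item since the section is the unique root of $h^*$ over $T^*$, conclude that $g^*$ either vanishes identically on the section or nowhere near the point.
\end{itemize}
Each of these steps uses only analyticity of the coefficients, not polynomiality, but this must be checked carefully.
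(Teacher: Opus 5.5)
Your proposal is correct and follows the paper's proof essentially step for step. You reduce to the case where $r$ vanishes throughout $\sigma$, straighten $\sigma$ with an analytic chart, solve $e=0$ for some $y_j$ with $j>s$ via the implicit function theorem, substitute into $f$ and $g$, and observe that $r^*$ is order-invariant of finite order on $T^*$ while $f^*$ is analytic delineable there; the pseudopolynomial version of Theorem~\ref{Main lemma for reduced equational projection} that you flag as the main obstacle is exactly Theorem~7.2 of \cite{McCallum2001}, which the paper simply cites, and your sketch of it (Hensel factorization, Loos' product formula, Lemma~A.3, uniqueness of the root of $h^*$ over $T^*$) is the standard argument behind that result.
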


\begin{proof}
    This proof will be modelled in part on that of 
    Theorem \ref{Variation of lifting theorem}, and in part on
    that of Theorem \ref{Variation of main lemma from M 1999}.
    Since many relevant details already appear 
    in the proofs of those theorems, this proof will be more concise.
    Let $\hat{\sigma}$ be a section of $f$ over $\sigma$ determined by
    the analytic function $\theta~:~\sigma \rightarrow \R$.
    By analogy with the proof of 
    Theorem \ref{Variation of main lemma from M 1999},
    it suffices to show that $g$ is sign-invariant in $\hat{\sigma}$ 
    near an arbitrary point $(p, \theta(p))$ of $\hat{\sigma}$,
    and we need to consider further only the case in which $r$
    vanishes everywhere in $\sigma$. Moreover, where $s$ denotes
    the dimension of $\sigma$, we may assume that $1 \le s \le n-2$.

    By Theorem 2.2 of \cite{McCallum1988},
    we choose an analytic coordinate system $\Phi$ about $p$
    in $\R^{n-1}$, informally denoted by $(y_1, \ldots, y_{n-1})$,
    such that $\sigma$ is defined locally by the equations
    $y_{s+1} = 0, \ldots, y_{n-1} = 0$ in the new coordinate system.
    Note that the choice of such a coordinate system entails 
    $\Phi(p) = 0$. With slight abuse of notation, we
    denote by $f(y_1, \ldots, y_{n-1}, x_n)$, 
    $g(y_1, \ldots, y_{n-1}, x_n)$,
    $r(y_1, \ldots, y_{n-1})$ and $e(y_1, \ldots, y_{n-1})$
    the original polynomials $f, g, r$ and $e$, respectively,
    expressed now in the new coordinates.
    We remind the reader that $f$ and $g$, so expressed,
    are pseudopolynomials in $x_n$, near $0$, and $r$ and $e$,
    so expressed, are analytic near $0$.
    Moreover, where $T$ is the linear subspace of $\R^{n-1}$
    defined by $y_{s+1} = 0, \ldots, y_{n-1} = 0$,
    the hypersurface $e = 0$ has no singular point in $T$, near $0$, and
    the order of the analytic function
    $r|_{e=0}$ is invariant in $T$, near the origin.
    The analytic function $\theta$, for its part, is now considered
    to be defined in $T$ (identified with $\R^s$), near $0$,
    with graph $\hat{\sigma}$, and we may assume that $\theta(0) = 0$,
    without loss of generality.

    By analogy with the proof of 
    Theorem \ref{Variation of lifting theorem}, we introduce
    simplified notation for the variables: namely, we put
    $x = (y_1, \ldots, y_s)$ and $y = (y_{s+1}, \ldots, y_{n-1})$.
    It follows analogously, 
    by the non-singularity assumption,
    that $\partial e/\partial y_j(0) \neq 0$, for some $j$, with
    $s+1 \le j \le n-1$. We denote the $(n-s-2)$-tuple of variables
    $(y_{s+1}, \ldots, \hat{y}_j, \ldots, y_{n-1})$ by $\eta$.
    (Here $\hat{y}_j$ denotes omission of $y_j$.)
    By the implicit function theorem, there is an open box
    $B = \prod_{i=1}^{n-1} I_i$ centred at the origin in $\R^{n-1}$
    and a function $\phi~:~(B^* = \prod_{i \neq j} I_i) \rightarrow I_j$
    such that $\phi$ is analytic, and for all $(x,y) \in B$,
    we have $e(x,y) = 0$ if and only if $y_j = \phi(x, \eta)$.

    For $(x, \eta) \in B^*$, put
    \[f^*(x, \eta, x_n) = f(x, y_{s+1}, \ldots, y_{j-1}, \phi(x, \eta),
                               y_{j+1}, \ldots, y_{n-1}, x_n),\]
    \[g^*(x, \eta, x_n) = g(x, y_{s+1}, \ldots, y_{j-1}, \phi(x, \eta),
                               y_{j+1}, \ldots, y_{n-1}, x_n).\]
    Then $f^*$ and $g^*$ are polynomials in $x_n$ whose coefficients
    are real analytic functions defined in $B^*$. Let us denote by
    $r^*(x, \eta)$ the resultant of $f^*(x, \eta, x_n)$ and
    $g^*(x, \eta, x_n)$ with respect to $x_n$. It is apparent that
    \[r^*(x, \eta) = r(x, y_{s+1}, \ldots, y_{j-1}, \phi(x, \eta),
                          y_{j+1}, \ldots, y_{n-1}),\]
    for all $(x, \eta) \in B^*$. Let us put $T^*$ equal to the
    $s$-dimensional linear subspace of $\R^{n-2}$ (the space of the
    coordinates $(x, \eta)$) defined by $\eta = 0$.
    It follows from the equation for $r^*$ immediately above,
    together with the invariance and finiteness of the order
    of $r|_{e=0}$ in $T$ near the origin, that $r^*$ is
    order-invariant, of finite order, in $T^* \cap B^*$
    (after suitable refinement of $B^*$, if necessary).
    Clearly, $f^*$ is analytic delineable on $T^* \cap B^*$,
    and $\hat{\sigma}$ (suitably restricted)
    may be considered to be a section of $f^*$ over
    $T^* \cap B^*$. Therefore, by Theorem 7.2 of
    \cite{McCallum2001}, $g^*$ is sign-invariant in $\hat{\sigma}$,
    near the origin. Hence $g$ is sign-invariant in $\hat{\sigma}$,
    near $(p, \theta(p))$.
\end{proof}

\subsection{Reducing the second projection step}
Let $n \ge 3$ and let $\phi^*$ be an input formula as
specified in Algorithm 1. Starting with $J_n = A$, the set of polynomials
occurring in $\phi^*$,
Step 3 of this algorithm repeatedly computes semi-restricted equational
projection sets $J_{j-1} = P_{E_j}^*(J_j)$; 
but it is remarked that, where possible,
fully restricted projection sets $P_{E_j}(J_j)$ should be used instead, 
for increased efficiency.
In fact, as recalled in Section 2.1, existing theory \cite{McCallum1999}
already permits the use of the fully restricted projection for the
first projection step; and an elementary argument 
shows that it can also be used for the ``last'' projection (of $J_2$,
see Proposition \ref{Bivariate case}).
The present subsection proves a result, combining the theorems of the
previous subsection, which states conditions under which the fully
restricted projection operation can also be used for the second projection
step.

We first recall some more of the notation of Algorithm 1, and make some
basic assumptions.
Suppose $f = 0$, with $f \in A$ assumed to be primitive and squarefree, 
is a level $n$ constraint of
the input formula $\phi^*$, with $d = \mathrm{discr}_{x_n}(f)$ 
and $a = \mathrm{ldcf}_{x_n}(f)$ non-zero polynomials.
We put $E_n = \{f\}$ and $J_{n-1} =P_{E_n}(J_n)$.
Then, by definition, $J_{n-1}$ is the union of $\mathrm{cont}(J_n)$
and $P_{F}(B)$, where $B$ is the finest squarefree basis for
$\mathrm{prim}(A)$ and $F$ is the set of irreducible factors of $f$.
Suppose that $e \in \Z[x_1, \ldots, x_{n-1}]$ is a product
of distinct irreducible projection factors at level $n-1$,
that $e = 0$ is a constraint of $\phi^*$,
and that $e$ and $a$ are coprime.
(Recall that $e=0$ may be either an explicit or implicit constraint.)
Put $E_{n-1} = \{e\}$, let $E$ denote the set of irreducible
factors of $e$, and set $J_{n-2} = P_{E_{n-1}}(J_{n-1})$.
We shall assume that $f$ and $e$ are {\em well-placed}: by this we mean
that either $\mathrm{discr}_{x_n}(f)$ is a product of elements of $E$,
or $e$ and $\mathrm{discr}_{x_n}(f)$ are coprime.

In order to state our next result we require one more concept:
we shall say that an element $g \in B \setminus F$ is
{\em well-positioned} with respect to $F$ and $E$ 
if either $\mathrm{res}_{x_n}(f,g)$ is a product of elements 
of $E$, or $e$ and $\mathrm{res}_{x_n}(f,g)$ are coprime.

\begin{theorem}
    Let $S \subseteq \R^{n-2}$ be a submanifold of codimension at most 1,
    suppose that each element of $J_{n-2}$ is order-invariant
    in $S$, and suppose that $S$ contains no point at which $e$
    vanishes identically. 
    
    Then $e$ is analytic delineable on $S$. 
    Moreover, $f$ is analytic delineable 
    on each section $\sigma$ of $e$ over $S$
    which contains no singular point of $e = 0$
    and no point at which $f$ vanishes identically,
    and each element of $B$ not in $F$ which is well-positioned with
    respect to $F$ and $E$
    is sign-invariant in every section of $f$ over each such $\sigma$.
\end{theorem}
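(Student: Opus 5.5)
The plan is to chain the three delineability theorems of the previous subsection, level by level, with the well-placed and well-positioned hypotheses used to split into a ``coprime'' case and a ``product of elements of $E$'' case.

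\textbf{Step 1: delineability of $e$ on $S$.} Since $E_{n-1}=\{e\}$ and $J_{n-2}=P_{E_{n-1}}(J_{n-1})\supseteq P(E)$, the discriminants, leading coefficients (or a sufficient coefficient set) and pairwise resultants of the irreducible factors of $e$ all lie in $J_{n-2}$. They are therefore order-invariant in $S$. Since $e$ vanishes identically at no point of $S$, the argument of Proposition~\ref{Bivariate case} applies: the product $e$ has order-invariant discriminant (the product of the factor discriminants and the squared pairwise resultants). Theorem 3.1 of \cite{Brown2001} together with Theorem 2 of \cite{McCallum1998} then give analytic delineability of $e$ on $S$, and order-invariance of $e$ in its sections. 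Lemma A.7 of \cite{McCallum1988} gives that the sections of distinct factors are disjoint or equal.

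\textbf{Step 2: delineability of $f$ on a section $\sigma$.} Fix a section $\sigma$ of $e$ over $S$ containing no singular point of $e=0$ and no point where $f$ vanishes identically. Since $a\in J_{n-1}$ and $e,a$ are coprime, $\mathrm{res}_{x_{n-1}}(e_i,a)\in J_{n-2}$ for every factor $e_i\in E$. By Theorem~\ref{Main lemma for reduced equational projection}, $a$ is sign-invariant in $\sigma$. We then apply Theorem~\ref{Variation of lifting theorem}, which requires that the order of $d|_{e=0}$ be invariant in $\sigma$. Well-placedness gives two cases.
\begin{itemize}
\item If $e$ and $d$ are coprime, each factor of $d$ is in $J_{n-1}$. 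Its resultant with each $e_i$ is in $J_{n-2}$, hence order-invariant in $S$. Since $S$ has codimension at most $1$ in $\R^{n-2}$ and $e$ is analytic delineable on $S$, Theorem~\ref{Variation of main lemma from M 1999} (with $n-1,e_i,d_j$ in place of $n,f,g$) shows that each $d_j|_{e=0}$ is order-invariant in $\sigma$. Orders add for products, so $d|_{e=0}$ is order-invariant.
\item If $d$ is a product of elements of $E$, then $e$ and $d$ are not coprime, so Theorem~\ref{Variation of lifting theorem} does not apply verbatim. Here I would argue directly that $d$ vanishes identically on $\sigma$. I would then go back into that proof: the non-singular point of $e=0$ allows the implicit-function reparametrization, and $d^*$ is identically zero, so the argument must instead use the nonvanishing of $f$ to reach delineability. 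This case I would treat more carefully, and may need to restrict it.
\end{itemize}

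\textbf{Step 3: sign-invariance of $g$.} Let $g\in B\setminus F$ be well-positioned, and put $r=\mathrm{res}_{x_n}(f,g)\in J_{n-1}$. If $r$ and $e$ are coprime, then exactly as in Step 2, Theorem~\ref{Variation of main lemma from M 1999} gives order-invariance of $r|_{e=0}$ in $\sigma$. The final theorem of the previous subsection then yields sign-invariance of $g$ in each section of $f$ over $\sigma$. If $r$ is a product of elements of $E$, then $r$ vanishes on $\sigma$, and one must show that $g$ vanishes or is sign-invariant on the $f$-sections. This parallels the degenerate case in Step 2.

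\textbf{Main obstacle.} I expect the main obstacle to be the non-coprime branches, where $d$ or $r$ divides a power of $e$. The trilogy of theorems assumes coprimality, so these cases must be handled by a separate local argument: a common-root or multiplicity argument via Hensel factorization of $f$ over $\sigma$. This is also where the codimension restriction on $S$ is crucial.
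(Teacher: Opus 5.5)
Step~1 and the two coprime branches follow the paper. The two non-coprime branches, however, are left open: $d$ a product of elements of $E$, and $\mathrm{res}_{x_n}(f,g)$ a product of elements of $E$. That is a genuine gap. The direction you sketch for them is also unnecessary: reopening the proof of Theorem~\ref{Variation of lifting theorem} with $d^*\equiv 0$, or a Hensel-type multiplicity argument. And the codimension hypothesis plays no role there.

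The missing idea is something you already established at the end of Step~1. By Theorem~2 of \cite{McCallum1998}, $e$ is order-invariant in each of its sections over $S$. So by Lemma~A.3 of \cite{McCallum1988}, every element of $E$, and hence every product of elements of $E$, is order-invariant in $\sigma$. Now take $\sigma$ itself, a connected submanifold of $\R^{n-1}$, as the base, with no reparametrization and no relative orders.

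\begin{itemize}
\item If $d$ is such a product, it is a nonzero polynomial that is order-invariant in $\sigma$. You also have $a$ sign-invariant in $\sigma$ and $f$ vanishing identically nowhere on $\sigma$. Theorem~3.1 of \cite{Brown2001} then gives degree-invariance, and Theorem~2 of \cite{McCallum1998} gives analytic delineability of $f$ on $\sigma$ directly. That $d$ may vanish on $\sigma$ is harmless, since the lifting theorem asks only for order-invariance. (In fact $d$ need not vanish on $\sigma$ at all, since $\sigma$ may be a section of a factor of $e$ not dividing $d$.)
\item Likewise, if $\mathrm{res}_{x_n}(f,g)$ is such a product, it is order-invariant in $\sigma$. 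Apply Theorem~\ref{Main lemma for reduced equational projection} over the base $\sigma$, on which $f$ is by now known to be delineable. This gives sign-invariance of $g$ in every section of $f$ over $\sigma$.
\end{itemize}

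So these degenerate branches are the easy ones. The codimension hypothesis enters only through Theorem~\ref{Variation of main lemma from M 1999}, in the coprime branches.

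Where you do differ from the paper, validly, is the sign-invariance of $a$ in $\sigma$. The paper applies Theorem~\ref{Variation of main lemma from M 1999} to $e$ and $a$, then Proposition~\ref{Simple fact}. You instead apply Theorem~\ref{Main lemma for reduced equational projection} over $S$ to $e$ and $a$, using that $\mathrm{res}_{x_{n-1}}(e,a)$ is a product of elements of $J_{n-2}$ and that $e$ is delineable on $S$. Your route is simpler and, for this step, needs neither the codimension hypothesis nor the non-singularity of $\sigma$.
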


\begin{proof}
    The fact that $e$ is analytic delineable on $S$ and order-invariant in
    each of its sections over $S$ follows from the assumptions
    by Theorem 3.1 of \cite{Brown2001} and Theorem 2 of 
    \cite{McCallum1998}. 
    The detailed reasoning is as follows.
    By an assumption, each element of $J_{n-2}$ is order-invariant 
    in $S$. Since each of $\delta = \mathrm{discr}_{x_{n-1}}(e)$ and
    $\alpha = \mathrm{ldcf}(e)$ is a product of certain elements
    of $J_{n-2}$,
    by definition of this projection set, $\delta$ is order-invariant
    in $S$ (by Lemma A.3 of \cite{McCallum1988})
    and $\alpha$ is sign-invariant in $S$.
    Another assumption states that 
    $e$ vanishes identically at no point of $S$.
    Therefore, by Theorem 3.1 of \cite{Brown2001}, $e$ is degree-invariant
    on $S$. Hence, by Theorem 2 of \cite{McCallum1998},
    $e$ is analytic delineable on $S$ and is order-invariant in
    each of its sections over $S$.
    
    Let $\sigma$ be a section of $e$ over $S$ which contains no singular
    point of the real hypersurface $e = 0$ and no point at which
    $f$ vanishes identically.
    We now show that $a$ is sign-invariant in $\sigma$.
    By assumption, $e$ and $a$ are coprime.
    By Theorem 3, in which we replace $n$ by $n-1$, and put
    $f = e$ and $g = a$ (hence $r = \mathrm{res}_{x_{n-1}}(e,a)$),
    the analytic function $a|_{e=0}$ is order invariant in $\sigma$,
    since $r \neq 0$ is order-invariant in $S$, by hypothesis
    (and Lemma A.3 of \cite{McCallum1988}, 
    if either $e$ or $a$ is reducible, or both are).
    Hence, $a$ is sign-invariant in $\sigma$, 
    by Proposition \ref{Simple fact} and the connectedness of $\sigma$.

    Next we show that $f$ is analytic delineable on $\sigma$.
    By assumption, $f$ and $e$ are well-placed.
    So, there are two cases to consider.
    In the first case, $d$ is a product of elements of $E$.
    Since $e$ is order-invariant in $\sigma$ (shown above),
    it follows by Lemma A.3 of \cite{McCallum1988} that
    $d$ is order-invariant in $\sigma$.
    Therefore, by Theorem 3.1 of \cite{Brown2001},
    $f$ is degree-invariant in $\sigma$.
    Hence, by Theorem 2 of \cite{McCallum1998},
    $f$ is analytic delineable on $\sigma$.
    In the second case, $e$ and $d$ are coprime.
    By Theorem 3, in which we replace $n$ by $n-1$,
    and put $f = e$ and $g = d$ (hence $r = \mathrm{res}_{x_{n-1}}(e,d)$),
    the analytic function $d|_{e=0}$ is order invariant in $\sigma$,
    since $r \neq 0$ is order-invariant in $S$, by hypothesis
    (and Lemma A.3 of \cite{McCallum1988}, if 
    either $e$ or $d$ is reducible, or both are).
    Therefore, by Theorem 2, $f$ is analytic delineable on $\sigma$.

    Now let $g \in B \setminus F$, with $g$ 
    assumed to be well-positioned with respect to $F$ and $E$. 
    There are two cases to consider.
    For the first case, we assume that $\mathrm{res}_{x_n}(f,g)$
    is a product of elements of $E$.
    Recall, as was shown above, that $e$ is order-invariant in $\sigma$.
    Therefore, by Lemma A.3 of \cite{McCallum1988},
    each element of $E$ is order-invariant in $\sigma$,
    and hence $\mathrm{res}_{x_n}(f,g)$ is order-invariant in $\sigma$.
    Therefore, by Theorem 1 (Section 2.1),
    $g$ is sign-invariant in every section of $f$ over $\sigma$.
    
    For the second case, we assume that $e$ and $\mathrm{res}_{x_n}(f,g)$
    are coprime.
    Let $f_i$ be an irreducible factor of $f$, of positive degree
    in $x_n$. Then $r_i := \mathrm{res}_{x_n}(f_i,g)$ and $e$ are coprime,
    since $r_i$ is a factor of $r$,
    and $r_i \in J_{n-1}$, by definition.
    By Theorem 3, in which we replace $n$ by $n-1$,
    and put $f = e$ and $g = r_i$ 
    (hence $r = \mathrm{res}_{x_{n-1}}(e,r_i)$),
    the analytic function $r_i|_{e=0}$ is order invariant in $\sigma$,
    since $r \neq 0$ is order-invariant in $S$, by hypothesis
    (and Lemma A.3 of \cite{McCallum1988}, if $e$ or $r_i$
    is reducible, or both are).
    Finally, by Theorem 4, $g$ is sign-invariant in every section
    of $f_i$ over $\sigma$, from which it follows that 
    $g$ is sign-invariant in every section of $f$ over $\sigma$.
\end{proof}

    We conclude this subsection with a few remarks. First, let us
    carefully consider the two assumptions of Theorem 5 that 
    ``$S$ contains no point at which $e$ vanishes identically'' and
    ``$\sigma$ ... contains ... no point at which 
    $f$ vanishes identically''. The reader may wonder
    if these are appropriate, since of the coefficients of
    $f$, only the leading one is placed into the first projection set 
    $J_{n-1}$, and likewise for $e$ in relation to $J_{n-2}$. 
    So, why is it reasonable to expect
    that these assumptions will hold in practical applications of
    Theorem 5? The reason is that our use of the Brown-McCallum
    projection operator entails the analysis and solution 
    of projection factor coefficient systems prior to the lifting phase
    of CAD. The solution of such systems is required only when it is
    determined that there are finitely many such solutions,
    as the projection operator fails otherwise.
    Moreover, such solutions -- the isolated points at which projection
    factors vanish identically -- must be ``added'' to the CAD
    progressively during the stack construction phase \cite{Brown2001}.
    So, in practical applications of Theorem 5, the submanifolds
    $S$ and $\sigma$ (which represent arbitrary cells of the CADs
    of $\R^{n-2}$ and $\R^{n-1}$, respectively) will satisfy
    the assumptions quoted above.

    Second, the assumptions of Theorem 5 that 
    ``$S$ has codimension at most 1'' and ``$\sigma$ ... contains
    no singular point of $e = 0$'' are admittedly restrictive.
    So, we offer here only a partial validation of the use
    of the fully reduced projection for the first two steps,
    given the availability of the equational constraints $f = 0$
    and $e = 0$ stipulated. This is enough to justify the use
    of the fully reduced projection for all $n \le 4$,
    provided that the codimension and non-singularity conditions hold
    in the case $n = 4$.
    Also, as we show in the next section,
    there are specific applications where $n > 4$
    which could benefit from
    the application of the theory we have presented in this paper.
    Moreover, we hope and trust that our results will stimulate
    further progress in this direction.

    Third, we offer a simple way to view the results we have presented.
    In rough terms, we have shown that, under the conditions
    specified and assuming that $d$ and $e$ are coprime, 
    we can omit the ``troublesome''
    $\mathrm{discr}_{x_{n-1}}(d)$ from the second projection set 
    $J_{n-2}$. We can also omit $\mathrm{discr}_{x_{n-1}}(p)$
    from $J_{n-2}$, for each (other) nonconstraint
    projection polynomial $p$ at level $n-1$.
    This is a hopeful result, since the inclusion
    of these polynomials, and others spawned by them during
    iterated projection, is a significant component of the
    ``doubly-exponential wall'' of CAD-based QE in the presence of
    multiple ECs \cite{DEMU25}.

\newpage 

\section{Detailed discussion of examples}

In this section we revisit two of the examples introduced in Section 1,
in the light of the theory presented in Sections 3 and 4.

The experiments with program QEPCADB which we report were conducted on
a Unix server with an Intel Xeon Gold CPU (2.30 GHz). For each
experiment, 1 megabyte of memory was made available.

\subsection{The example of Fortuna, Gianni and Trager}
We present details about a system of generic solutions
to Example 2 from Section 1, introduced by \cite{FGT01},
which we found with the help of algebra and logic software.

Algorithms 1 and 2 (Section 3) have not yet been implemented.
However, the existing program QEPCADB can be used to simulate
part of the work of Algorithm 1.
QEPCADB allows the user to select the projection operation to be used at each level from a short list of available operators.
For our first experiment, we selected Hong's projection to be used at
all levels, since this is the most efficient operator with no exceptions for its use which is currently available.
We used the quantified formula $\phi^*$ defined in Example 2 (Section 1)
as input, with the variable ordering $(c,b,a,z,y,x)$:
\[\phi^* \equiv (\exists z)(\exists y)(\exists x) \phi(c,b,a,z,y,x).\]
The program delivered the expected solution \texttt{TRUE} 
after 14 milliseconds.
This solution was found by constructing a suitable partial CAD of
$\R^6$, considered to be part of a full CAD $\mathcal{D}$
of $\R^6$, in theory.
The CAD $\mathcal{D}_3$ of $(c,b,a)$-space (that is, the parameter space)
induced by $\mathcal{D}$
contains 63 cells. The first cell listed has index $(1,1,1)$
and sample point $(-2, -3, -1)$. All the 6 projection factors at levels
no greater than 3 have negative sign on this cell.
Similar information about each of the other 62 cells could, 
as for the first cell,
be displayed using the command \texttt{d-cell}.
The program issued a warning about the possible use
of McCallum's projection (though this was not selected), since
the set $A$ of polynomials occurring in the input is not well-oriented.
In fact, it is possible to use McCallum's projection provided that
an assumption is used to rule out occurrence of nullifying cells
of positive dimension for the first program run. 
(If this is to be done, then further runs of the program are needed 
to treat the special cases left out by the first.)
Apart from small reductions in computation time and cell count, 
there was not much
benefit to be gained by use of McCallum's projection for this example
at this stage.

A succinct and insightful enhanced solution, in the spirit of
Sections 1 and 3 above, was found using key information provided by
QEPCADB (and alternatively by Gr\"{o}bner basis computation).
The authors of \cite{FGT01} refer to a Gr\"{o}bner basis
for the ideal generated by the polynomials in $A$.
This basis contains the polynomial
\[\rho(c,b,a,z) = z(acz + az + b^2 - bc - b),\]
which is precisely the level 4 polynomial identified by
QEPCADB as the unique implicit constraint of $\phi^*$ at this level.
Inspection of $\rho$ suggests that it may be helpful first to consider
carefully the assumption $a(c+1) \neq 0$.
The reason is that, clearly, for every fixed $(c,b,a)$ satisfying
this assumption, there are at most two solutions of
the equation $\rho(c,b,a,z) = 0$, hence at most two solutions of
the system $\phi(c,b,a,z,y,x)$.
Indeed, the solutions of the latter system may be found
by explicitly solving $\rho(c,b,a,z) = 0$,
obtaining $z = 0$ and $z = b(1-b+c)/(ac+a)$,
then back substituting these solutions into the system $\phi$.
We find $(x,y,z) = (0,0,0)$ and
\[(x,y,z) = \left(\frac{b-c-1}{a(c+1)}, \frac{b(1-b+c)}{a(c+1)^2},
            \frac{b(1-b+c)}{a(c+1)}\right).\]
Moreover, close inspection of these solution forms suggests that it
may be worthwhile to slightly strengthen the assumption:
$a(b-c-1)(c+1) \neq 0$. The reason is that, for each fixed $(c,b,a)$
which satisfies the stronger assumption, there are {\em exactly}
two distinct solutions $(x,y,z)$.
We see that a system of generic solutions $\mathcal{O}$ for $\phi^*$
is the collection
of 8 connected open slightly skewed ``octants'' determined (bordered)
by the planes $a=0$, $b-c-1 = 0$ and $c+1 = 0$ in $\R^3$.
In each octant, the number $\nu$ of distinct solutions $(x,y,z)$
of $\phi(c,b,a,z,y,x)$ is 2,
and $(x,y,z)$ is expressed in terms of $(a,b,c)$ by the above formulas.
Consideration of each bordering plane in turn reveals that,
at every point of the variety defined by $a(b-c-1)(c+1) = 0$,
we have $\nu = 1$ or $\nu = \infty$.
In light of this, we may conclude that the system of generic
solutions $\mathcal{O}$ described above is minimal (in the sense that the
cardinality of $\mathcal{O}$ is minimum).

We ran program QEPCADB with input $\phi^*$, issuing the command
\begin{verbatim}
    assume [a(b - c - 1)(c + 1) /= 0].
\end{verbatim}
and selecting McCallum's projection (which is valid under
this assumption). The program found the expected solution formula
\texttt{TRUE} in 7 milliseconds.
The induced CAD of $(c,b,a)$-space has 24 cells labelled \texttt{T}
(and 12 cells not labelled, because they do not satisfy the assumption).
Presumably, the minimal system described above could in principle
be obtained by pasting together adjacent cells for which $\nu = 2$.
While the command \texttt{d-cell I} gives some information
about the cell whose index is \texttt{I}, it does not report
the number of associated solutions 
of the system $\phi$ nor how the solutions
depend on the parameters. It would seem to be reasonably straightforward
to enhance the capability of QEPCADB to provide such information,
as outlined in Section 3.

\subsection{More about Solotareff's problem}
Recall that Example 3 (Section 1) discussed Solotareff's approximation 
problem: find the best approximation to a real polynomial of
degree $n$ by a real polynomial of degree at most $n-2$.
(As in Example 3, $n$ does not denote the number of variables,
rather the degree of the given polynomial, for consistency
with the existing literature.) Example 3 reviewed a computer-assisted
solution to the problem in the case $n = 3$.
This subsection carefully considers the case $n = 4$,
for which detailed treatments  -- excepting the complete mathematical
justification of the equational projection method used --
are found in \cite{Collins1996, Collins1998}.
Recall that in this case the problem is to find the
best approximation to $x^4 + rx^3$ by a polynomial of the
form $ax^2 + bx + c$; and we need to consider only the case $r > S_4$,
since otherwise theory delivers an explicit solution.

In this case we have $S_4 = 12 - 8 \sqrt{2}$,
the smaller root of $x^2 - 24x + 16$.
Consequently, we can express $r > S_4$ by the formula
$r^2 - 24r + 16 < 0 \vee [r > 1 \wedge r^2 -24r + 16 \ge 0]$.
We also note that $c = 1 - a$, from theory.
Thus, for $r > S_4$, we can formulate the problem of expressing
$a$ in terms of $r$ as the following QE problem (\cite{Collins1998},
Section 8). However the quantified formula which appears in 
\cite{Collins1998}, Section 8, has some typographical errors 
which are corrected
in \cite{McCallum1998}, Section 7, and as follows in our formula $\phi^*$:
\begin{align}
(\exists b)(\exists u)(\exists v) &[[r^2 -24r + 16 < 0
                        \vee [r > 1 \wedge r^2 -24r + 16 \ge 0]]
 \nonumber \\
&\wedge r - b > 0 \wedge -1 < u \wedge u < v \wedge v < 1 
 \nonumber \\
&\wedge v^4 + rv^3 - av^2 - bv - (1-a) = r - b \label{f1}
\\ 
&\wedge u^4 + ru^3 - au^2 - bu - (1-a) = r - b \label{f2}
\\
&\wedge 4v^3 + 3rv^2 - 2av - b = 0 \label{f1'}
\\
&\wedge 4u^3 + 3ru^2 - 2au - b = 0]. \label{f2'}
\end{align}
\cite{Collins1996} sets up and solves an equivalent, 
slight variation of the above.

We aim here to justify the use of the fully reduced equational projections
in the solutions provided by \cite{Collins1996, Collins1998}.
The variable ordering chosen was $(r,a,b,u,v)$.
Inspecting the above formulation, we see that there
are four explicit equational constraints, two in the variable $v$
and two in $u$. Let us denote by $f_1(r,a,b,u,v)$ the polynomial
obtained by shifting all terms to the left hand side of the equation
on line (\ref{f1}) in the above formula; notice that the
equation on line (\ref{f1'}) is then $f_1' = 0$, where $f_1'$ denotes
the derivative of $f_1$ with respect to $v$.
Thus, the two constraints in $v$ are $f_1 = 0$ and $f_1' = 0$.
Similarly, denote by $f_2(r,a,b,u)$ the polynomial obtained
by shifting all terms to the left hand side of the equation
on line (\ref{f2}) above; then the equation on line (\ref{f2'}) is $f_2' = 0$,
where this time $f_2'$ denotes the derivative with respect to $u$.
The two constraints in $u$ are $f_2 = 0$ and $f_2' = 0$.

For the first projection, $f_1$ is chosen as the pivot and 
hence $\mathrm{discr}_v(f_1) = \mathrm{res}_v(f_1, f_1')$
occurs ``twice'' (in computation only once, of course)
in the first projection set $J_4$, amongst other polynomials.
Let $e_1(r,a,b)$ denote the greatest squarefree divisor of
$\mathrm{discr}_v(f_1)$. Then $e_1$ is the product of two 
irreducible factors, say $p_1$ and $q_1$;
and $e_1 = 0$ is an {\em implicit} constraint at level 3.
For the second projection, $f_2$ is chosen as the pivot;
hence $\mathrm{discr}_u(f_2) = \mathrm{res}_u(f_2, f_2')$
occurs in the second projection set $J_3$, amongst other polynomials.
Let $e_2(r,a,b)$ denote the greatest squarefree divisor of
$\mathrm{discr}_u(f_2)$. Then $e_2$ is the product of two
irreducible factors, say $p_2$ and $q_2$;
and $e_2 = 0$ is another implicit constraint at level 3.
For the third projection, $e_2$ is selected as the pivot constraint.
The third projection set $J_2$ hence contains (factors of)
$\delta(r,a) := \mathrm{discr}_b(e_2)$ and 
$\rho(r,a) := \mathrm{res}_b(e_2, e_1)$, amongst other polynomials.
Collins observed that the latter polynomial $\rho(r,a)$ -- the propagated
bivariate constraint -- has seven irreducible factors;
and he identified one of these factors,
which we shall denote by $\rho^*(r,a)$, as crucial to his solution:
\begin{align*}
&324a^4 + 324r^2 a^3 - 2016a^3 + 108r^4 a^2 - 1128r^2 a^2 + 4576a^2
\\
&\qquad +12r^6 a - 224r^4 a + 1392r^2 a - 4480a - 15r^6 + 112r^4 - 608r^2
                                                           + 1600.
\end{align*}
We shall state the solution shortly; but for now we continue
our step-by-step summary of its working out.
For the fourth and last projection, $\rho(r,a)$ is used as the pivot,
with the univariate projection set 
$J_1 = P_{\{\rho\}}(J_2)$ obtained accordingly.
QEPCAD then obtains an irreducible basis for $J_1$,
and constructs a CAD $\mathcal{D}_1$ of $\R^1$ 
for which each element of this basis,
and hence of $J_1$, is order-invariant in every cell of $\mathcal{D}_1$.
Proposition \ref{Bivariate case} may now be applied,
showing that $\rho(r,a)$ is analytic delineable on each 1-cell $c$
of $\mathcal{D}_1$, and that each element of $J_2$ is order-invariant
in every such section. 
(Strictly speaking, Proposition \ref{Bivariate case} makes equivalent conclusions about irreducible bases for the sets $\{\rho\}$
and $J_2$, respectively.)
Finally, Collins used the interactive stepwise stack construction
capability of QEPCAD to construct a claimed partial CAD $\mathcal{D}$
of $\R^5$ sufficient to infer the solution involving $\rho^*$
to the given QE problem. Careful reading of Collins'
work reveals that his solution actually comprises two separate, related
parts: a theorem $T$ about $\rho^*$ which he discovered, and
a formula $\phi'$ equivalent to the given $\phi^*$, whose definition
relies upon the theorem $T$.
The theorem $T$ states that for every $r > S_4$, $\rho^*(r,a)$
(as a polynomial in $a$) has exactly two distinct real roots.
The solution formula $\phi'$ may then be expressed as follows:
``$r > S_4$ and $a$ is the larger real root of $\rho^*(r,a)$''.
An astute reader will notice that the formal expressions of both
theorem $T$ and formula $\phi'$ in the language of Tarski algebra may 
require quantifers, so $T$ and $\phi'$ are not, 
strictly speaking, quantifier-free.

To fulfil our aim of justifying Collins' solution and methodology
we shall use Theorem 5 to verify the claimed partial CAD
$\mathcal{D}$.
By the known existence and uniqueness of a solution to
Solotareff's problem \cite{Achieser1956},
it will suffice to verify that, for every 1-dimensional section
$S \subset \R^2$ of $\rho^*$,
for which $r > S_4$ and $a$ is the larger real root of $\rho^*(r,a)$,
the pivot constraints used at levels 3, 4 and 5 determine a tree
$\mathcal{T}_S$ of cells such that the root node is the cell $S$,
and the children of each node $n$ are the cells in a stack over $n$.
Additionally, $\mathcal{T}_S$ should have the properties that
every leaf is at the same level, and the quantifier-free
part $\phi$ of the given $\phi^*$ is true throughout at least one
leaf of $\mathcal{D}_S$.

Let $S \subset \R^2$ be an arbitrary such 1-dimensional 
section of $\rho^*$. By Proposition \ref{Bivariate case},
as noted above, each element of $J_2$ is order-invariant in $S$.
Firstly, we shall use Theorem 5 to verify the delineability of the
pivot constraint polynomials $e_2$ and $f_2$ used at levels 3
and 4, respectively. We will see that $e_2$ is delineable over $S$,
and $f_2$ is delineable over every section of $e_2$ over $S$.
Indeed, in the notation of Theorem 5,
let us put $f = f_2$, $d = \mathrm{discr}_u(f_2)$,
$a = \mathrm{ldcf}_u(f_2) = 1$, and $e = e_2~(= d)$.
Noticing that $f$ and $e$ are well-placed (because $d = e$),
$e$ and $a$ are coprime (since $a = 1$), and $S$ contains no
point at which $e$ vanishes identically (since $\mathrm{ldcf}_b(e) = 27$),
we see that all the hypotheses of Theorem 5 hold.
Therefore, by Theorem 5, $e$ is analytic delineable on $S$;
hence the sections of $e$ over $S$ determine a stack over $S$.
Let $\sigma$ be a section of $e$ over $S$.
By inspection of the detailed output of QEPCAD,
we see that $\rho^*$ and $\mathrm{discr}_b(e)$ are coprime.
Hence, by the order-invariance of $\mathrm{discr}_b(e)$
(a product of elements of $J_2$) in $S$, 
we must have $\mathrm{discr}_b(e) \neq 0$
throughout $S$. It follows that $\sigma$ contains no singular
point of the real algebraic surface $e = 0$.
Again, by Theorem 5, $f$ is analytic delineable on $\sigma$;
hence the sections of $f$ over $\sigma$ determine a stack over $\sigma$.
With $B$ denoting the set of level 4 irreducible projection factors
and $F$ the set of irreducible factors of $f$,
we may observe that every $g \in B \setminus F$ is well-positioned
with respect to $F$ and $E$. Once again, by Theorem 5,
each such $g$ is sign-invariant in every section of $f$ over $\sigma$.

Secondly, we shall employ a different application of Theorem 5
to verify the delineability of the level 5 constraint polynomial $f_1$ -- as a polynomial in the 4 variables $r, a, b, v$ -- 
over the arbitrary section $\sigma \subset \R^3$ of $e_2$ over $S$ 
introduced in the previous paragraph.
Indeed, in the notation of Theorem 5, we now put $f = f_1$,
$d = \mathrm{discr}_v(f_1)$, $a = \mathrm{ldcf}_v(f_1) = 1$
and $e = e_2$. Noticing that $f$ and $e$ are well-placed
(because now $e$ and $d$ are coprime),
and $e$ and $a$ are coprime (since $a = 1$),
we see again that all the hypotheses of Theorem 5 hold.
Therefore, by Theorem 5, $f$ is analytic delineable on $\sigma$,
and consequently $f$ is analytic delineable 
on every section $\hat{\sigma}$ of $f_2$ over $\sigma$.
Hence the sections of $f$ over $\hat{\sigma}$ determine
a stack over $\hat{\sigma}$.
With $B$ now denoting the set of all level 5 irreducible projection factors and $F$ now denoting the set of irreducible factors of $f$,
we may observe that every $g \in B \setminus F$ whose variables
lie in $\{r, a, b, v\}$ is well-positioned with respect to $F$ and $E$.
By Theorem 5, each such $g$ is sign-invariant in every section
of $f$ over $\hat{\sigma}$.
Since $v - u$, the only exception, is linear in $v$,
it is not difficult to see that it too is sign-invariant in
every section of $f$ over $\hat{\sigma}$.
This concludes our validation of the existence of the
claimed tree $\mathcal{T}_S$.

We make two remarks in conclusion of this subsection.

\begin{remark}
    Collins' solution method for this problem instance is 
    essentially a variant of Algorithm 1 (Section 3).
    For this variant, the number $p$ of parameters is 1,
    and the number $u$ of unknowns is also 1: indeed, the parameter
    is $r$ and the unknown is $a$. A slight change in terminology
    is required: for $\alpha \in \R$, the value $\beta \in \R$ 
    is a solution associated with $\alpha$ if $\phi^*(r,a)$ is true.
    In this variant of Algorithm 1,
    using the notation of that algorithm, pivots and projection sets
    are computed down to $e_2$ and $J_1$, respectively, in Steps 1, 2,
    and 3. In Step 5, we compute the open intervals of a $J_1$-invariant
    CAD $\mathcal{D}_1$ of parameter space $\R^1$, 
    together with solution cardinality
    information. From theory, $\nu = 1$ holds for
    each open interval $c$ with $r > S_4$ in this problem
    instance. Finally, in Step 6, for each open interval $c$ with
    $r > S_4$, we construct the defining formula for $a$ 
    in terms of $r \in c$.
    Perhaps most important of all to note is that Collins
    \cite{Collins1996, Collins1998} used the fully reduced projection
    operation $P_{E_j}(J_j)$ at each projection step,
    although this has been rigorously validated 
    in general only for the cases
    $j = n$ \cite{McCallum1999} and $j = 2$ 
    (Proposition \ref{Bivariate case} of the present paper).
    Yet, we have outlined above a new justification 
    for the use of the fully reduced projection operation
    for $2 < j < n$ also, for this problem instance.
\end{remark}

\begin{remark}
    Daniel Lazard \cite{Lazard2006} described an adaptation of the
    concepts and methods of \cite{LR07} which is effective in
    solving Solotareff's problem up to $n = 9$ in a certain sense.
    No explicit solution formulas for the desired coefficients of the
    best approximation polynomial in terms of the parameter $r$
    were provided, though these could presumably be obtained in
    principle. In any case, our aim here in this paper is to enhance
    CAD-based QE as a general purpose tool when multiple equations are 
    present. We simply illustrated the potential benefit of our
    enhancements using Solotareff's problem, rather than making
    any claim about superior performance relative to other methods
    for this particular problem.
\end{remark}

\section{Conclusion and planned further work}
We summarize the work reported in this paper. Sections 3 and 4 are 
the technical heart of the new contributions that we present.
Section 3 contains a formal algorithm which, given a parametric
system in the form of an existentially quantified formula
$\phi^* \equiv (\exists x_{n-k+1}) \cdots (\exists x_n) 
               \phi(x_1, \ldots, x_n)$, where $s = n-k$
is the number of parameters, finds a description of
a system of generic solutions for $\phi^*$.
The details concerning the structure of the expected input $\phi^*$
and the constitution of the generic solution description in the output
are provided in the specification of Algorithm 1.
A supporting proposition and worked examples are included.
Elaboration of the general definition of the candidate constraint set
at each level is needed.

Section 4 introduces the concept of the order of a real analytic function
$g(x_1, \ldots, x_n)$ relative to the real variety $V_f$
of another real analytic function
$f(x_1, \ldots, x_n)$ at a non-singular point $p$ of $V_f$: 
in symbolic notation, this is denoted by
$\mathrm{ord}_p g|_{f=0}$.
A trilogy of new delineability related theorems using this concept
is presented. The purpose of this theory is to validate the use
of the fully reduced equational projection operation for the
second projection step of Algorithm 1 and its variants 
under suitable assumptions.
This is a relatively small, but useful, step toward ``pushing back
the doubly exponential wall'' of CAD 
when many equations are present in $\phi$.

It would be desirable to improve Theorem 3 by removing the hypothesis
about the codimension of $S$. Further to that,
a simplified and extended version of the theory presented in Section 4
would be a worthwhile aim in the future.
Perhaps this could be achieved using the concept of the multiplicity
of a variety at a point or a closely related notion.
It would be useful to implement the algorithms of Section 3
and evaluate them experimentally.
Investigation of the idea of replacing 
the single projection step paradigm of CAD
using a ``multi-step'' projection operation,
when many equational constraints are available,
would also seem to be a promising direction.

\bibliography{references.bib}

\end{document}